\definecolor{ao(english)}{rgb}{0.0, 0.5, 0.0}
\newcommand{\blacktriangleq}{\mathrel{\ooalign{$\blacktriangleright$\cr\hidewidth\raise-1.1ex\hbox{$-$}\hidewidth}}}
\newtheorem{theorem}{Theorem}
\newtheorem{corollary}[theorem]{Corollary}
\newtheorem{lemma}{Lemma}
\newtheorem{proposition}{Proposition}
\newtheorem{fact}{Fact}[section]
\newtheorem*{lemma*}{Lemma}
\newtheorem*{theorem*}{Theorem}
\newtheorem*{proposition*}{Proposition}
\theoremstyle{definition}
\theoremstyle{definition}
\newtheorem{definition}{Definition}
\theoremstyle{definition}
\newtheorem{remark}{Remark}
\theoremstyle{definition}
\global\long\def\op{Permutation Invariance}
\global\long\def\p{Strong Pareto}
\global\long\def\c{Completeness}
\global\long\def\finsum{\succeq_{FSP}}
\global\long\def\sumord{\succeq_{SP}}
\global\long\def\countord{\succeq_{CP}}
\author{Jeremy Goodman\footnote{Johns Hopkins University, Miller Department of Philosophy} \ and Harvey Lederman\footnote{University of Texas at Austin, Philosophy and Population Wellbeing Initiative. Thanks to Geir Asheim, Will Combs, Cian Dorr, Jake Nebel, Marcus Pivato, Dean Spears, and Christian Tarsney for comments and discussion.}}
\date{\today}
\title{Maximal Social Welfare Relations on Infinite Populations Satisfying Permutation Invariance}
\begin{document}

\maketitle

\begin{abstract} \noindent We study social welfare relations (SWRs) on an infinite population. Our main result is a new characterization of a utilitarian SWR as the \emph{largest} SWR (in terms of subset when the weak relation is viewed as a set of pairs) which satisfies Strong Pareto, Permutation Invariance (elsewhere called ``Relative Anonymity'' and ``Isomorphism Invariance''), and a further ``Quasi-Independence'' axiom. 

\

\noindent \emph{Keywords}: Intergenerational justice, utilitarianism

\

\noindent \emph{JEL Classification}: D63, D71

\end{abstract}

\section{Introduction}

Social choice over infinite utility streams has a long tradition in economics, starting at least from \citet{ramsey1928mathematical}. The recent literature has taken off from impossibility results concerning tradeoffs between, on the one hand, the (positive) sensitivity of social preferences to individual preferences, and, on the other, the impartiality of social preferences (\citet{basu2003aggregating}). A basic such result displays the inconsistency of Strong Pareto (stating that a distribution which is strictly preferred by some individuals and weakly preferred by all must be socially strictly preferred), and  Anonymity (stating that any distributions related by a permutation of individuals are socially indifferent) (\citet{liedekerke1995impossibility}). More strikingly, Strong Pareto, together with a weak impartiality axiom, Finite Anonymity (that any distributions related by a finitely-supported permutation of individuals are socially indifferent), rules out the existence of any real-valued social welfare function (SWF), given a sufficiently rich set of well-being levels (\citet{basu2003aggregating}). Moreover, while (complete) social welfare orders (SWOs) that satisfy both Strong Pareto and Finite Anonymity do exist (\citet{svensson1980equity}), they must be non-constructive, and so cannot be explicitly described (\citet{fleurbaey2003intertemporal,zame2007can,lauwers2010ordering,dubey2011fleurbaey,dubey2021social}).

In response to these results, much attention has been directed to social welfare relations (SWRs), possibly incomplete pre-orders (transitive and reflexive relations) which satisfy both Strong Pareto and Finite Anonymity. In this vein, a variety of ``utilitarian'' SWRs have been developed, including the ``catching up'' and ``overtaking'' criteria (\citet{atsumi1965neoclassical,weizsacker1965existence,gale1967optimal,brock1970axiomatic,fleurbaey2003intertemporal,asheim2004resolving,basu2007utilitarianism}, for surveys see \citet{pivato2024intergenerational,kamaga2020intergenerational,lauwers2016axiomatic,asheim2010intergenerational}). 

But there are at least two fundamental conceptual questions about these SWRs and their axiomatic basis. 

First, the most commonly studied ``catching up'' and ``overtaking'' SWRs are defined in terms of an intrinsic order on the population, corresponding to the order of ``generations'' in time. But \citet{ramsey1928mathematical} motivates the idea of impartiality by arguing that a person's or generation's position in time should not be relevant for truly impartial social preferences. Perhaps more strikingly, if infinitely many individuals can exist in a single generation, there is no natural sequencing, making it unclear how to apply criteria that depend on such an order. The fact that Finite Anonymity allows for such intuitively order sensitive SWRs suggests that this axiom does not capture the full force of Ramsey's impartiality idea, and motivates the exploration of alternative impartiality axioms. 

Second, existing characterizations of such incomplete SWRs are in terms of their being the \emph{minimal} relations (in terms of set inclusion, when SWRs are viewed as sets of pairs of distributions) satisfying certain axioms. Such characterizations leave open the question whether the SWRs exhibit \emph{too much} incompleteness, and whether ``more decisive'' SWRs may be found that would provide more extensive policy recommendations.

This paper develops an approach to SWRs for infinite populations which answers both of these questions together. We start with an impartiality axiom, which we call Permutation Invariance, but which has also been called Anonymity (\citet[p. 72]{sen1984collective}), Relative Anonymity (\citet{asheim2010generalized}), Isomorphism Invariance (\citet{lauwers2004infinite,jonsson2020consequentialism}), or Qualitativeness (\citet{askell2018diss}), stating that a distribution $w$ is weakly socially preferred to a distribution $v$ if and only if $\pi(w)$ is weakly socially preferred to $\pi(v)$, where $\pi$ is a permutation of the population, which acts in the obvious way on distributions. This axiom is logically independent of Finite Anonymity,\footnote{While Permutation Invariance alone is logically independent of Finite Anonymity, Permutation Invariance together with a Finite Completeness axiom does imply Finite Anonymity. See \citet[n. 4]{asheim2024exploring}, and Remark \ref{finiteanonymity}.} but it intuitively corresponds to a fuller notion of impartiality than Finite Anonymity, since it allows arbitrary permutations, not just those with finite support. As emphasized by \citet{asheim2010generalized} (see also below section \ref{comparisons}), the axiom rules out two of the most widely studied order-sensitive SWRs (``catching up'', ``overtaking''), and thus answers our first question above, providing an  approach to impartiality which is more general than Finite Anonymity (since the former but not the latter rules out intuitively order-sensitive SWRs). (In line with this approach to impartiality, we work throughout with a population that does not have an intrinsic order.) 

Like Finite Anonymity, Permutation Invariance is compatible with Strong Pareto. But unlike Finite Anonymity, Permutation Invariance together with Strong Pareto rules out the existence of any SWOs (Proposition \ref{incompleteness}). This raises the question, which we study here, of whether there  is a \emph{largest} such SWR, that is, an SWR which satisfies the axioms and makes all comparisons made by any relation satisfying the relevant axioms.  As a preliminary result, we show that, in a setting where distributions are assignments of the members of our infinite population to the set $\{0,1\}$, there is such a largest relation among relations which satisfy Strong Pareto and Permutation Invariance (Proposition \ref{twovalued}). 

Our main result (Theorem \ref{discrete}) provides a characterization of a utilitarian SWR in the more general setting where distributions are finite-valued functions from the population to elements of $\mathbb R$. The characterization relies on one further axiom on social preferences, which we call Quasi-Independence, stating that if $w$ is weakly preferred to $v$ then for $\alpha \in [0,1]$ $\alpha w + (1-\alpha) u$ is weakly preferred to $\alpha v + (1-\alpha) u$ (where scalar multiplication and addition are understood pointwise). We show that our utilitarian relation is the \emph{largest} relation satisfying Strong Pareto, Permutation Invariance, and Quasi-Independence.

This result provides an answer to our second question above, of whether standardly-studied incomplete SWRs are sufficiently decisive. By demonstrating that our target SWR (which coincides, in the main setting we study, with that of \citet{basu2007utilitarianism} and the ``utilitarian time-invariant overtaking'' of \citet{asheim2010generalized}) is the largest in this set, we show that, given Permutation Invariance and our other axioms, there cannot be a relation which compares more pairs of distributions. Since these axioms are normatively compelling, this provides significant conceptual support for such SWRs, demonstrating that one cannot hope for greater comparability than they exhibit.

Section \ref{setup} provides setup, shows that Permutation Invariance and Strong Pareto rule out completeness, and defines the notion of a relation being largest. Section \ref{mainresult} presents the main results. Section \ref{extensiontor} provides ancillary results: we show the necessity of Quasi-Independence to the main theorem \ref{discrete}; and that this theorem does not hold where distributions may have infinite range. Section \ref{relatedwork} discusses related work. Section \ref{comparisons} recalls how Permutation Invariance rules out the standard ``overtaking'' and ``catching up'' SWRs; section \ref{characterizations} compares our maximality-based characterization to minimality-based ones; and section \ref{errors} discusses a result of \citet{lauwers2004infinite} that appears to cover related ground to ours, characterizing the order we study here using a weak impartiality assumption. We show that the claimed result in that paper is incorrect, but that it can be repaired, and consider the relationship between the corrected result and our main theorem.

\section{Setup and Motivation}\label{setup}

\subsection{Notation and Basic Definitions}

Throughout $X$ is a fixed countably infinite set, thought of as the set of individuals. As emphasized earlier, no order is assumed on this set. The set of \emph{worlds} is $W \subseteq \mathbb R^X$, where the real numbers are understood as the set of welfare levels for individuals.

A function is \emph{finite-valued} if and only the cardinality of its range is finite. We write $W_F$ for the set of such finite-valued functions in $\mathbb R^X$. Our main result concerns the setting where $W=W_F$. For any finite subset of individuals and any assignment of those individuals to real welfare levels, there are infinitely many elements of $W_F$ which realize that distribution for that set of individuals. Note that while the range of each function in this set has finite cardinality, the union of the ranges of the functions in the set is $\mathbb R$.

Throughout we study properties of SWRs i.e. reflexive, transitive, relations on $W$. As usual we write $w \succ v$ when $w \succeq v$ and $v \not \succeq w$ and $w \sim v$ when $w \succeq v$ and $v\succeq w$. Two outcomes $w, v$ are \emph{incomparable} in an order $\succeq$, denoted $w \perp v$ if and only if both $w \not \succeq v$ and $v \not \succeq w$. 
\subsection{Strong Pareto and Permutation Invariance}

We study the consequences of a strong (positive) sensitivity or efficiency axiom:

\begin{description}
\item[Axiom 1. Strong Pareto] For all $w, v \in W$, if for all $x \in X$, $w(x) \geq v(x)$ and for some $x \in X, w(x)>v(x)$, then $w\succ v$.
\end{description}

We combine this axiom, as promised, with an impartiality axiom. Given a permutation $\pi$ of $X$ we take $\pi(w)$ to be defined by $\pi(w)(x)=w(\pi(x))$ for all $x \in X$.

The following axiom has been widely discussed:
\begin{description}
\item[Anonymity] $w \sim \pi (w)$. 
\end{description}
We will not assume this axiom, because it is inconsistent with Strong Pareto, provided $|L|\geq 2$ (\citet{liedekerke1995impossibility}; for strengthenings of his result, see \citet[proposition 1]{liedekerke1997sacrificing}, \citet{lauwers1998intertemporal}, \citet{basu2003aggregating}, \citet[Theorem 1]{fleurbaey2003intertemporal}, \citet{mitra2007existence}). To see this, let $L=\{0,1\}$ and understand populations as characteristic functions of subsets of $X$, namely, the subset of people with the greater well-being level. The inconsistency then follows from the fact that any infinite, coinfinite subset of $X$, $A$ can be injected into a strict subset of itself, $B$ by a permutation of $X$.

In place of Anonymity, we impose the weaker:
\begin{description}
\item[Axiom 2. Permutation Invariance] For any, $w, v \in W$, $w \succeq v$ if and only if $\pi (w) \succeq \pi (v)$.
\end{description}
This axiom was called ``Anonymity'' in \citet{sen1984collective} and ``Relative Anonymity'' in \citet{asheim2010generalized}. We call it ``Permutation Invariance'' to highlight the property of the order with respect to permutations. Like Anonymity, Permutation Invariance is an axiom of ``impartiality''. It is implied by Anonymity, but is strictly weaker than it.

Conceptual motivation for Permutation Invariance as opposed to Anonymity is provided by natural, intuitively impartial binary relations on distributions, which nevertheless violate Anonymity. Consider for instance the ``Pareto preorder'', which ranks one distribution as better than another if and only if the first Pareto dominates the second. There is an intuitive sense in which this relation is ``impartial'': it does not give special consideration to individuals (on their own, or collectively). But this relation violates Anonymity in the infinite setting, as shown by the argument above: $w$ may (strictly) Pareto-dominate $\pi(w)$ (because an infinite set may be injected into a strict subset of itself by a permutation). The Pareto preorder does however satisfy Permutation Invariance. This fact motivates the use of Permutation Invariance as an impartiality axiom.

\subsection{Incompleteness and Maximality}

While Strong Pareto and Permutation Invariance are consistent, they are not consistent with the following standard axiom:
\begin{description}
    \item[\c{}:] For all worlds $w$ and $v$, either $w\succeq v$ or $v\succeq w$.
\end{description}

\begin{proposition}[\citet{askell2018diss}]\label{incompleteness} Let $W= \{0,1\}^X$. \p{} and \op{} are inconsistent with \c{}.
\end{proposition}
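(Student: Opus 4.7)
The plan is to derive a contradiction from \p{}, \op{}, and \c{} by first establishing a symmetry lemma and then colliding it with \p{} via transitivity.

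First I would prove the following lemma: for any two disjoint, countably infinite subsets $C, D \subseteq X$, the characteristic functions $\chi_{C}$ and $\chi_{D}$ must satisfy $\chi_{C} \sim \chi_{D}$. To establish this, fix a bijection $f : C \to D$ and define $\sigma : X \to X$ to agree with $f$ on $C$, with $f^{-1}$ on $D$, and with the identity on $X \setminus (C \cup D)$. Then $\sigma$ is an involution, hence a permutation of $X$. Using $\sigma(w)(x) = w(\sigma(x))$ together with the disjointness of $C$ and $D$, a direct case check yields $\sigma(\chi_{C}) = \chi_{D}$ and $\sigma(\chi_{D}) = \chi_{C}$. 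By \c{}, either $\chi_{C} \succeq \chi_{D}$ or $\chi_{D} \succeq \chi_{C}$; applying \op{} with $\sigma$ to whichever inequality holds yields the reverse inequality, so in either case $\chi_{C} \sim \chi_{D}$.

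Given the lemma, I would finish as follows. Partition $X$ into two countably infinite subsets $C$ and $D$ (possible since $X$ is countably infinite), pick any $x_{0} \in C$, and set $C' = C \setminus \{x_{0}\}$. Both $(C,D)$ and $(C',D)$ are disjoint pairs of countably infinite subsets of $X$, so the lemma yields $\chi_{C} \sim \chi_{D}$ and $\chi_{C'} \sim \chi_{D}$. Transitivity of $\succeq$ then gives $\chi_{C} \sim \chi_{C'}$. But $\chi_{C}$ and $\chi_{C'}$ agree everywhere except at $x_{0}$, where $\chi_{C}(x_{0}) = 1 > 0 = \chi_{C'}(x_{0})$, so \p{} forces $\chi_{C} \succ \chi_{C'}$, a contradiction.

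The only non-routine step is locating the right permutation in the lemma: the key observation is that disjointness of $C$ and $D$ together with their equal cardinality makes available an involution that swaps $\chi_{C}$ and $\chi_{D}$, which is precisely what \op{} and \c{} together need in order to collapse their comparison to indifference. This structure parallels the Liedekerke-style impossibility for \a{} with \p{}, but exploits \c{} in place of \a{} to extract the needed symmetry from the strictly weaker \op{}.
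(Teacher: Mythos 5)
Your proof is correct and follows essentially the same route as the paper's: both use a permutation swapping two disjoint infinite sets to show that Completeness plus Permutation Invariance forces indifference between their characteristic functions, and both then derive a contradiction with Strong Pareto via a proper infinite subset (your $C'$, the paper's $A^-$) and transitivity. Your packaging as a reusable indifference lemma applied twice, rather than the paper's three-way case analysis on the comparison of $w$ and $v$, is only a cosmetic reorganization of the same argument.
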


\noindent While \citet{askell2018diss} gives an argument for this result, it has not been discussed in the economics literature.

\begin{proof}
Let $A$ and $B$ be disjoint infinite subsets of $X$, and $A^-$ an infinite proper subset of $A$. Let $w, w^-, v$ be distributions such that: 
\begin{itemize}
\item for all $x \in A$ $w(x)=1$, and otherwise $w(x)=0$;
\item for all $x \in A^-$, $w^-(x) =1$, and otherwise $w^-(x)=0$
\item for all $x \in B$ $v(x)=1$ and otherwise $v(x)=0$.
\end{itemize}
By \p{} $w\succ w^-$. 

By \c{}, $w\succ v$, or $v \succ w$, or $w \sim v$. There is a permutation $\pi$ so that $x \in A$ iff $\pi(x)\in B$, and $x \in B$ iff $\pi(x) \in A$, so that $\pi(w)=v$ and $\pi(v)=w$. If $w \succ v$, then by \op{}, $\pi(w) \succ \pi(v)$, i.e. $v \succ w$, a contradiction. Similarly if $v \succ w$, then $\pi(v) \succ \pi(w)$, i.e. $w \succ v$, a contradiction.

If $w \sim v$, then since $w\succ w^-$, $v \succ w^-$. There is a permutation $\pi'$ so that $x \in A^-$ iff $\pi'(x)\in B$, and $x \in B$ iff $\pi'(x) \in A^-$ so that $\pi'(v)=w^-$ and $\pi'(w^-)=v$. Given $v \succ w^-$, by  \op{}, $\pi'(v)\succ \pi'(w^-)$, i.e. $w^-\succ v$, a contradiction.
\end{proof}

One reaction to this result would be to hold that either Strong Pareto or Permutation Invariance must be rejected so that Completeness can be upheld. But in our view, given the wide array of known impossibility results for infinite populations, it is not obvious how to proceed, and it is at least open that Completeness does fail in this setting. Thus, in this paper, we assume both Strong Pareto and Permutation Invariance, and study SWRs as opposed to SWOs.

A concern about SWRs is that they may have ``too much'' incompleteness. For example, let $W=\{0,1\}^X$ and consider again the Pareto preorder that is, the preorder according to which $w \succeq v$ iff $\{ x | w(x)=1\} \supseteq \{ x | v(x) =1\}$. This preorder satisfies both Strong Pareto and Permutation Invariance. But the preorder is very weak: it fails to rank a world in which any finite number $n$ individuals receive wellbeing level $1$ (and all else $0$) over a world in which one (distinct) person receives wellbeing level $1$ (and all else $0$).

This example shows that our SWRs can have a great deal of incompleteness, intuitively, much ``more'' than is required. This motivates our question here, about \emph{maximal} orders consistent with our axioms. Formally:

\begin{definition}[Extension and Maximality] A relation $\succeq$ \emph{weakly extends} a relation $\succeq'$ if and only if, whenever $x \succeq' y$, $x\succeq y$. A relation $\succeq$ is \emph{maximal} within a set of relations $\mathcal R$ if and only if $\succeq \in \mathcal R$ and there is no relation in $\mathcal R$ that weakly extends $\succeq$.
\end{definition}

Maximal relations in this sense may fail to be unique. Since our interest is in using a property related to maximality to provide a characterization result, our main results will focus on the more demanding property, of being largest, which does implies uniqueness: 

\begin{definition}[Largest]
A binary relation $\succeq$ is the \emph{largest} relation in a set of binary relations $\mathcal R$ if and only if $\succeq \in \mathcal R$ and $\succeq$ weakly extends every $\succeq' \in \mathcal R$. 
\end{definition}
If a largest relation in a set of binary relations exists, it is guaranteed to be the \emph{unique} maximal relation within the set. Our characterization results about largest SWRs will thus demonstrate that there is a unique such maximal relation.

Our notion of ``largest'' uses \emph{weak} extensions rather than extensions, where a preorder $\succeq'$ is said to \emph{extend} a preorder $\succeq$ if and only if, whenever $x \succeq y$, $x\succeq'y$, \emph{and} whenever $x \succ y$, $x \succ'y$. If a largest relation within a set of relations exists, no relation in the set extends it. But it is not guaranteed to be the unique relation which cannot be extended.

We study weak extension rather than extension here because our axioms will be compatible with SWRs that make ``deviant'' strict comparisons where a weak comparison is more intuitive. It is compatible with Strong Pareto and Permutation Invariance that for $a\neq b \in X$, if $w(a)=1$ and for all other $x \in X$ $w(x)=0$, while $v(b)=1$ and for all other $x\in X$, $v(x)=0$, $w\succ v$. But it is also compatible that $v\succ w$. This simple example already establishes there is not a unique relation that cannot be (strongly) extended. Moreover, these strict comparisons are implausible. Considering the largest relation in our sense, as a weak extension of all other relations both allows for an existence result, and delivers a more intuitive SWR, since deviant strict comparisons can be ``smoothed out'' into indifference by the largest relation. (We discuss this point again in connection with remark \ref{vallentynerem}.)

\section{Main Result}\label{mainresult}

Our main result establishes the existence of a largest relation when $W=W_F$. We first define the preorder we will show to be maximal (\ref{fsoso}), then state one further axiom (\ref{axioms}), then state the theorem (\ref{proof}).

\subsection{The Sum Preorder and the Finite Sum Preorder}\label{fsoso}

We will show the maximality of an SWR studied axiomatically by \citet{lauwers2004infinite} (cf. \citet{vallentyne1997infinite}). Recall that the sum of an infinite sequence converges (diverges) \emph{unconditionally} if and only if it converges (diverges) regardless of the order in which terms are summed. Such a sum converges conditionally if and only if it converges on some but not all orderings of its terms.
\begin{definition}[Sum Preorder ($\sumord$)] For all $w, v \in W$, $w \sumord v$ if and only if $\sum_{x \in X} w(x)-v(x)$ converges unconditionally to $r \geq 0$, or diverges unconditionally to positive infinity.
\end{definition}

\begin{remark}\label{finsumremark} We focus throughout explicitly on $\sumord$, but our main result holds equivalently for an alternative preorder, which \citet{asheim2010generalized} call \emph{utilitarian time-invariant overtaking}:
\begin{description}
\item[Finite Sum Preorder ($\finsum$)] For all $w, v \in W$, $w \finsum v$ if and only if there is a finite set $A \subset X$ such that for every finite $B\supseteq A$, $\sum_{x \in B} w(x)-v(x)\geq 0$. \end{description}
When $W=\mathbb R^X$, and preorders are viewed as sets of pairs $\finsum \subsetneq \sumord$ (see below in remark \ref{vallentynerem}). But when $W=W_F$, the two orders in fact coincide. Since our result will be proven in this setting, it applies directly to both. 

Similarly, when $W=W_F$, the ``utilitarian'' criterion of \citet{basu2007utilitarianism} coincides with $\finsum$, though the latter is again stronger in a general setting where $W=\mathbb R^X$ (see \citet[\S1]{asheim2010generalized} for an example).
\end{remark}

\subsection{Quasi-Independence}\label{axioms}

Our main result uses a further axiom in addition to \p{} and \op{}. To state the axiom, we will use scalar multiplication and addition on worlds with these operations understood pointwise. So, for instance, for worlds $w,v$, $w+v$ is defined so that for all $x$, $(w+v)(x)=w(x)+v(x)$, and for real $\alpha$, $\alpha w$ is defined so that for all $x$ $(\alpha w)(x)=\alpha (w(x))$. The axiom is then as follows:
\begin{description}
\item[Axiom 3. Quasi-Independence] For any $w, v, u \in W$, if $w \succeq v$, then for any $\alpha\in [0,1]$ $\alpha w + (1-\alpha) u \succeq \alpha v + (1-\alpha)u$.
\end{description}
It will be important later that if $w, u\in W_F$, then for any $\alpha \in [0,1]$ $\alpha w_1 + (1-\alpha) w_2 \in W_F$ as well.

Quasi-Independence looks similar to the right-to-left direction of the standard Independence axiom. In this sense it is significantly weaker than Independence itself. However, since we are working not with lotteries but directly with worlds, and since we assume that multiplication and addition of worlds are understood pointwise, the assumption is structurally different than the Independence axiom. We have adopted the name to highlight the relationship between the two axioms, while indicating the important differences between them.

In this paper, we study preorders on worlds. A common setting considers preorders defined instead over lotteries on worlds. In that more general setting, Axiom 3 (Quasi-Independence) can be motivated by the fact that it is a consequence of the following two axioms (where we write $\Delta(W)$ for the set of finite-support lotteries over $W$ and assume that preferences are defined on the set of such lotteries, with worlds understood as degenerate lotteries):
\begin{description}
\item[Weak Independence] For any $F, G \in \Delta (W)$ if $w \succeq v$, $F(w)=G(v)=\alpha$, and $F(u)=G(u)=(1-\alpha)$ then $F\succeq G$.
\item[Ex Ante Indifference] For finite-support $F, G \in \Delta (W)$, if for all $x\in X$, $EV(F(x))=EV(G(x))$, then $F \sim G$.
\end{description}
Here, for a lottery $L$ and individual $x \in X$, $EV(L(x))= \sum_{w \in W} L(w)w(x)$.

The first of these, as indicated earlier, is a significant weakening of the Independence Axiom of \citet{von1944theory}; it is just the ``if'' direction of that axiom. The second axiom, a version of which is used pivotally in \citet{harsanyi1955cardinal}, is a strong, and less familiar assumption. A famous example due to \citet{diamond1967cardinal} illustrates directly how the axiom rules out ex ante egalitarianism. A further example of \citet{myerson1981utilitarianism} shows how it rules out ex post egalitarianism. Our Quasi-Independence assumption similarly rules out these forms of egalitarianism. But we make it as a natural starting point, to explore its consequences in the infinite setting. As we discuss later, we think it is worth exploring whether further results about largest relations can be given using weaker assumptions than this one, or alternatives to it.

\subsection{Main Results}\label{proof}

In the simple setting with only two wellbeing levels, Axiom 1 (Strong Pareto) and 2 (Permutation Invariance) suffice for a largest preorder:

\begin{proposition}\label{twovalued} Let $W= \{0,1\}^X$. $\sumord$ is the largest relation in the set of preorders on $W$ which satisfy Strong Pareto and Permutation Invariance.\end{proposition}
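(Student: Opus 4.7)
My plan is to split the proof into two parts: verifying that $\sumord$ itself is a preorder on $\{0,1\}^X$ satisfying \p{} and \op{}, and then showing that it weakly extends every such preorder.

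For the first part, let $A_u := \{x : u(x) = 1\}$ for any $u \in \{0,1\}^X$. The terms of $\sum_x(w(x)-v(x))$ lie in $\{-1,0,1\}$, so this sum converges unconditionally iff $A_w \triangle A_v$ is finite (in which case it equals $|A_w \setminus A_v| - |A_v \setminus A_w|$), and diverges unconditionally to $+\infty$ iff $A_w \setminus A_v$ is infinite and $A_v \setminus A_w$ is finite. From this characterization, reflexivity and transitivity are routine; \p{} holds because Pareto improvement in $\{0,1\}^X$ means $A_w \supsetneq A_v$; and \op{} holds because any permutation of $X$ preserves the cardinalities of $A_w \setminus A_v$ and $A_v \setminus A_w$.

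For the main direction, fix any preorder $\succeq$ on $\{0,1\}^X$ satisfying \p{} and \op{}, assume $w \succeq v$, and argue by contradiction from $\neg(w \sumord v)$. By the characterization above, this negation falls into one of three cases: (i) both differences are finite with $|A_w \setminus A_v| < |A_v \setminus A_w|$; (ii) $A_w \setminus A_v$ is finite and $A_v \setminus A_w$ is infinite; or (iii) both $A_w \setminus A_v$ and $A_v \setminus A_w$ are infinite. In every case one can choose a proper subset $B' \subsetneq A_v \setminus A_w$ with $|B'| = |A_w \setminus A_v|$ (in case (iii), take $B'$ to be all but one element of $A_v \setminus A_w$, still countably infinite), and fix a bijection $\phi : A_w \setminus A_v \to B'$. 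Let $\pi$ be the involution on $X$ that swaps each $x \in A_w \setminus A_v$ with $\phi(x) \in B'$ and fixes every other element. A direct computation yields $A_{\pi(w)} = (A_w \cap A_v) \cup B' \subsetneq A_v$, so $v$ strictly Pareto-dominates $\pi(w)$, whence $v \succ \pi(w)$ by \p{}. Combining with $w \succeq v$ and transitivity gives $w \succ \pi(w)$. But $\pi^2 = \mathrm{id}$, so applying \op{} to $w \succeq \pi(w)$ yields $\pi(w) \succeq \pi(\pi(w)) = w$, contradicting $\pi(w) \not\succeq w$.

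The main delicate point is handling case (iii), where $A_v \setminus A_w$ is countably infinite and so has the same cardinality as a potentially countably infinite $A_w \setminus A_v$: we still need $B'$ to be a \emph{proper} subset (to get strict, not merely weak, Pareto dominance), which is achieved by dropping a single element. Once the involution $\pi$ is in hand, the argument is the same classical swap already deployed in the proof of Proposition \ref{incompleteness}.
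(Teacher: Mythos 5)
Your proof is correct and relies on essentially the same mechanism as the paper's: permute the support of $w$ onto a \emph{proper} subset of the support of $v$ via a swap permutation, then derive a contradiction from Strong Pareto, Permutation Invariance, and transitivity. The only difference is organizational—you treat all three failure cases uniformly with a single involution (dropping one element of $A_v \setminus A_w$ in the doubly infinite case), whereas the paper splits into the doubly infinite case (delegated to the proof of Proposition \ref{incompleteness}) and the finite-deficit case (settled by a four-term chain of strict and weak comparisons)—so nothing substantive separates the two arguments.
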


In the more general setting of finite-valued worlds, Axioms 1-3 suffice:

\begin{theorem}\label{discrete} Let $W=W_F$. $\sumord$ is the largest relation in the set of preorders on $W$ which satisfy Strong Pareto, Permutation Invariance, and Quasi-Independence. 
\end{theorem}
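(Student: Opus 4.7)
The plan is to show that $\sumord$ both satisfies the three axioms and weakly extends every preorder on $W_F$ that does. That $\sumord$ is a preorder satisfying \p{}, \op{}, and Quasi-Independence is a routine check: for $w, v \in W_F$ the difference $w - v$ is finite-valued, so $\sum_{x \in X}(w(x) - v(x))$ converges only when it converges absolutely, and permutations merely reindex such sums. The substantive task is the converse: for any preorder $\succeq'$ satisfying the axioms, $w \succeq' v$ implies $w \sumord v$. I argue the contrapositive by contradiction, supposing $w \succeq' v$ and $w \not\sumord v$.

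The main tool is a \emph{Mixing Lemma}: if $w \succeq' v$ and $\pi_1, \ldots, \pi_n$ are permutations of $X$ with nonnegative weights $\alpha_i$ summing to $1$, then $\sum_i \alpha_i \pi_i(w) \succeq' \sum_i \alpha_i \pi_i(v)$. This follows by applying \op{} to get $\pi_i(w) \succeq' \pi_i(v)$ and then iteratively combining these via Quasi-Independence together with transitivity; all intermediate objects remain in $W_F$. Set $d = w - v$, $P = \{x : d(x) > 0\}$, and $N = \{x : d(x) < 0\}$. That $w \not\sumord v$ means exactly one of three cases holds: (a) $P \cup N$ is finite with $\sum d < 0$; (b) $P$ is finite but $N$ is infinite; or (c) both $P$ and $N$ are infinite. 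In each case I construct a finite family of permutations $\{\pi_i\}$ with weights $\{\alpha_i\}$ so that the mixed worlds $\bar w = \sum_i \alpha_i \pi_i(w)$ and $\bar v = \sum_i \alpha_i \pi_i(v)$ satisfy $\bar w \leq \bar v$ pointwise with strict inequality somewhere. The Mixing Lemma then gives $\bar w \succeq' \bar v$, while \p{} gives $\bar v \succ' \bar w$: contradiction.

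Cases (a) and (b) use uniform weights $1/|T|!$ over the finite group $\mathrm{Sym}(T)$, embedded in $\mathrm{Sym}(X)$ by fixing $X \setminus T$, for a well-chosen finite $T \supseteq P$. The averaged $\bar w$ equals $w$ off $T$ and is constant at $\frac{1}{|T|}\sum_{x \in T} w(x)$ on $T$ (and similarly for $\bar v$), so $\bar w - \bar v$ equals $d$ off $T$ and $\frac{1}{|T|}\sum_{x \in T} d(x)$ on $T$. In case (a), let $T$ be the (finite) support of $d$: then $d$ vanishes off $T$ and $\sum_T d = \sum d < 0$. In case (b), take $|T \cap N|$ large enough that $\sum_P d + \sum_{T \cap N} d \leq 0$; off $T$, $d \leq 0$ with strict inequality on the infinite set $N \setminus T$.

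The main obstacle is case (c): no finite $T$ contains the infinite set $P$, so the $\mathrm{Sym}(T)$-average always leaves positive deviations uncancelled outside $T$. The idea is instead to overwhelm each positive deviation by averaging against several negative ones. Set $p_{\max} = \max_{x \in P} d(x)$ and $n_{\min} = \min_{x \in N}(-d(x))$; both are positive and finite since $d$ is finite-valued. Pick an integer $K$ with $K n_{\min} > p_{\max}$, partition $N$ into pairwise disjoint countably infinite subsets $N_1, \ldots, N_K$, and choose bijections $\phi_i : P \to N_i$ (possible since $P$ and each $N_i$ have cardinality $\aleph_0$). For $i = 1, \ldots, K$, let $\pi_i$ be the involution of $X$ that swaps each $x \in P$ with $\phi_i(x) \in N_i$ and fixes every other point; let $\pi_0 = \mathrm{id}$. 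Applying the Mixing Lemma to $\pi_0, \ldots, \pi_K$ with uniform weights $1/(K+1)$, a direct computation shows that $\bar w - \bar v$ vanishes off $P \cup N$ and is bounded above by $(p_{\max} - K n_{\min})/(K + 1) < 0$ on $P$ and on each $N_i$. Hence $\bar v$ strictly Pareto-dominates $\bar w$ on the nonempty set $P \cup N$, yielding the contradiction.
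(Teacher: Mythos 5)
Your proposal is correct, and it reaches the theorem by a route that differs from the paper's in the decisive case. The shared foundation is your Mixing Lemma, which is precisely the paper's Convex Dominance (Lemma~\ref{convdom}, an induction on Quasi-Independence) combined with \op{}; and your cases (a)--(b), which symmetrize over a finite set $T$, play the role of the paper's finite-difference Lemma~\ref{lemma1} (the paper handles your case (b) slightly differently, by passing via \p{} to an auxiliary world $w^-$ agreeing with $v$ outside a finite set and then invoking that lemma, rather than averaging over a finite $T$ and leaving an infinite negative tail). The genuine divergence is in the doubly-infinite case (c): the paper goes through an Intermediate Value Lemma and a Split Value Corollary, introducing the extreme values $a,b,c,d$ and a three-way subcase analysis with modified worlds $w^+$, $v^-$, $w_m$, whereas you pick $K$ with $K n_{\min} > p_{\max}$, partition $N$ into $K$ infinite pieces, swap $P$ against each piece by involutions, and average with weights $1/(K+1)$, so that the mixed pair is immediately reversed by \p{}, yielding the contradiction in one stroke; your computation of $\bar w - \bar v$ on $P$, on each $N_j$, and off $P \cup N$ checks out, and all mixtures stay in $W_F$. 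Finite-valuedness enters both arguments at the same point---it supplies your uniform bounds $p_{\max}, n_{\min}$ and the paper's extreme values $a,b,c,d$---so neither argument extends as stated to $\mathbb R^X$, consistently with Proposition~\ref{nonmaximality}. What your route buys is brevity and uniformity in the hard case, plus an explicit check that $\sumord$ itself satisfies the axioms (left implicit in the paper); what the paper's route buys is intermediate results of independent interest (Lemma~\ref{lemma1} is reused in the conclusion, and the Intermediate Value Lemma isolates a reusable structural fact), at the cost of a more intricate case analysis.
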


The proofs of both of these claims are given in appendices. As mentioned earlier, an immediate corollary of each of these results is that the relevant preorders are the unique maximal relations in the relevant sets.

\begin{remark}[Finite Anonymity]\label{finiteanonymity} Finite Anonymity is a consequence of each of these results, but not an assumption of them. The reason is roughly as follows. It is compatible with our axioms that any worlds which differ on only finitely many individuals be ranked. So if a largest relation exists it must compare all such pairs of worlds. Given such ``Finite Completeness'', Permutation Invariance implies Finite Anonymity (\citet[n. 4]{asheim2024exploring}).\end{remark}

\section{Limitations of the Main Result}\label{extensiontor}

In this section, we illustrate limitations of theorem \ref{discrete}. First, we show that Axiom 3 (Quasi-Independence) is necessary for the main result (section \ref{cd}). Second, we show that the Sum Preorder is not maximal if we allow infinite-valued worlds, i.e. if we let $W=\mathbb R^X$ (proposition \ref{nonmaximality}). 

\subsection{The Necessity of Quasi-Independence}\label{cd}

Given $W=\mathbb R^X$,  $\sumord{}$ is not maximal, given only Strong Pareto, and Permutation Invariance.

\begin{definition}[Sum Preorder plus differences ($\trianglerighteq$)]
For any $w, v$, $w^+_v=\{ x| w(x)>v(x)\}$. Let $w \trianglerighteq v$ if and only if either $w \sumord{} v$, or if, for any finite subsets $Y, Z$ with $|Y|=|Z|$, $Y \subseteq w^+_v$ and $Z \subseteq v^+_w$, $\sum_{y \in Y} w(y)-v(y) \geq \sum_{z \in Z} v(z) - w(z)$.
\end{definition}

\begin{proposition} Let $W=\{0,1,2\}^X$. $\trianglerighteq$ is a preorder which satisfies Strong Pareto and Permutation Invariance, and
which properly weakly extends $\sumord$. \end{proposition}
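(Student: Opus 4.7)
The plan is to establish four properties of $\trianglerighteq$ — reflexivity, transitivity, Strong Pareto, and Permutation Invariance — and to exhibit a pair witnessing that the inclusion $\sumord \subseteq \trianglerighteq$ is strict. Reflexivity, Permutation Invariance, and the inclusion $\sumord \subseteq \trianglerighteq$ are straightforward. Reflexivity follows since $\sum_{x \in X}(w(x) - w(x)) = 0$ converges unconditionally, giving $w \sumord w$ and hence $w \trianglerighteq w$. Permutation Invariance holds because the $\sumord$ clause is invariant under reindexing by $\pi$, while the differences clause uses $(\pi w)^+_{\pi v} = \pi^{-1}(w^+_v)$ to produce a sum-preserving bijection between valid pairs $(Y, Z)$ for $(w, v)$ and those for $(\pi w, \pi v)$. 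The inclusion $\sumord \subseteq \trianglerighteq$ is immediate from the definition.

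For Strong Pareto, suppose $w(x) \geq v(x)$ for all $x$ and $w(x_0) > v(x_0)$ for some $x_0$. Then $\sum (w - v)$ is a sum of non-negative terms with at least one positive term, so it either converges (unconditionally) to a positive value or diverges unconditionally to $+\infty$; in either case $w \sumord v$ and hence $w \trianglerighteq v$. To see $v \not\trianglerighteq w$: symmetrically $v \not\sumord w$, and the differences clause fails for $v \trianglerighteq w$ because $v^+_w = \emptyset$ while $x_0 \in w^+_v$; taking $Z = \{x_0\}$, the clause demands a matching $Y \subseteq v^+_w$ of size $1$, which does not exist.

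Transitivity is the main obstacle. Suppose $w_1 \trianglerighteq w_2$ and $w_2 \trianglerighteq w_3$. In the $\{0, 1, 2\}^X$ setting, unconditional convergence coincides with absolute convergence, which requires finite support of the relevant difference. Stratify $P^{(k)}_{ij} = \{x : w_i(x) - w_j(x) = k\}$ for $k \in \{\pm 1, \pm 2\}$. A preliminary claim, checked directly from the definition, is that $w_i \trianglerighteq w_j$ holds iff either $w_i \sumord w_j$, or (with $\aleph_0$ exceeding every finite cardinal) $|P^{(2)}_{ij}| \geq |P^{(-2)}_{ij}|$ together with $|P^{(1)}_{ij}| + |P^{(2)}_{ij}| \geq |P^{(-1)}_{ij}| + |P^{(-2)}_{ij}|$. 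Using the pointwise identity $w_1 - w_3 = (w_1 - w_2) + (w_2 - w_3)$, each $x \in P^{(k)}_{13}$ is classified by its coordinates in $P^{(\cdot)}_{12} \times P^{(\cdot)}_{23}$ whose entries sum to $k$; summing cardinalities through this decomposition lets one derive the desired comparisons for $(w_1, w_3)$ from those for the intermediate pairs. The hardest sub-case should be when neither hypothesis reduces to $\sumord$ and the various $P^{(k)}_{ij}$ are all countably infinite, as one then must juggle $\aleph_0$-arithmetic on the stratified partitions while ensuring the finer $\pm 2$-tier comparison is preserved through composition.

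For proper extension, take disjoint infinite $A, B$ with $A \cup B = X$ and define $w(x) = 1$ if $x \in A$ and $0$ otherwise, and $v(x) = 1$ if $x \in B$ and $0$ otherwise. Then $w - v$ equals $+1$ on $A$ and $-1$ on $B$, with both positive and negative parts summing to infinity, so $\sum(w - v)$ is neither unconditionally convergent nor unconditionally divergent to $\pm\infty$; hence neither $w \sumord v$ nor $v \sumord w$. However, for any finite $Z \subseteq v^+_w = B$, one may choose $Y \subseteq w^+_v = A$ with $|Y| = |Z|$ (possible since $A$ is infinite) and then $\sum_Y(w - v) = |Y| = |Z| = \sum_Z(v - w)$, so the differences clause yields $w \trianglerighteq v$; symmetrically $v \trianglerighteq w$. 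Thus $w$ and $v$ are related by $\trianglerighteq$ but incomparable under $\sumord$, establishing strict extension.
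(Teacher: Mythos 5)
The decisive problem is transitivity, which you yourself call the main obstacle but never actually prove: you assert a ``preliminary claim'' (the cardinality characterization via the $P^{(k)}_{ij}$) without proof, and then only describe what the composition argument ``should'' involve. Worse, that route cannot be completed, because transitivity fails for the relation you have characterized. Concretely, partition $X$ into infinite sets $D,E,F$, and let $w_1=1$ on $E$ and $0$ elsewhere, $w_2=1$ on $D\cup F$ and $0$ elsewhere, $w_3=2$ on $D$ and $0$ elsewhere. The differences $w_1-w_2$ and $w_2-w_3$ take only the values $\pm 1$, each on an infinite set, so every pair of equinumerous finite $Y,Z$ gives equal sums and (by the differences clause, or by your characterization with $|P^{(1)}|=|P^{(-1)}|=\aleph_0$) $w_1\trianglerighteq w_2$ and $w_2\trianglerighteq w_3$. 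But $w_1-w_3$ is $+1$ on $E$ and $-2$ on $D$: taking $Y$ a singleton of $E$ and $Z$ a singleton of $D$ gives $1<2$, your tier condition $|P^{(2)}_{13}|\geq|P^{(-2)}_{13}|$ fails ($0<\aleph_0$), and $w_1\not\sumord w_3$; hence $w_1\not\trianglerighteq w_3$. So no amount of $\aleph_0$-arithmetic in the deferred ``hardest sub-case'' will close the argument. Your own witness in the final paragraph already signals the difficulty: it puts the indicator functions of two complementary infinite sets into the symmetric part of $\trianglerighteq$, and that single pair, combined with Strong Pareto, Permutation Invariance and transitivity, reproduces the contradiction in the proof of Proposition \ref{incompleteness}; the paper's witness ($w=2$ on an infinite coinfinite $A$, $v=1$ on its complement) deliberately adds an asymmetric comparison instead.

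A second, related issue is that your Strong Pareto step reads the differences clause as requiring, for each finite $Z\subseteq v^+_w$, the existence of an equinumerous $Y\subseteq w^+_v$ satisfying the inequality. As written, the clause is a universal statement over pairs $(Y,Z)$ with $|Y|=|Z|$, and when $v^+_w=\emptyset$ it is satisfied vacuously (only $Y=Z=\emptyset$ qualifies), so the conclusion $v\not\trianglerighteq w$ does not follow on the literal reading; your cardinality characterization is likewise correct only for the existential reading. You need to state explicitly which reading of the definition you are working with. For comparison, the paper's own proof consists of the witness pair plus a bare assertion of reflexivity, transitivity, Strong Pareto and Permutation Invariance; your treatment of the easy parts (reflexivity, Permutation Invariance, the inclusion, and a witness for properness) is fine and matches it, but exactly where the paper is silent is where your attempt either stops at a sketch or, as the triple above shows, would break.
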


\begin{proof} Given an infinite, coinfinite $A\subset X$, let $w(x)=2$ for all $x \in A$ and otherwise $0$, while $v(x)=1$ for all $x \in X \setminus A$ and otherwise $0$, we have $w \trianglerighteq v$, but $w \not \sumord{} v$. This relation is transitive and reflexive, and satisfies \p{} and \op{} (because the sums are defined order-independently). (The proof extends to $W=L^\mathbb R$ for any $L\subseteq R$ such that $|L| \geq 3$.) \end{proof}

\subsection{Non-maximality of the Sum Preorder Beyond $W_F$}\label{nonmaximalsection}

Next we show that the maximality result for $W_F$ cannot be extended to the setting where $W=\mathbb R^X$, i.e. where infinite-valued worlds are allowed. First we note that in this setting $\finsum$ and $\sumord$ no longer coincide:

\begin{remark}[\citet{vallentyne1997infinite}]\label{vallentynerem} Let $W=\mathbb R^X$. $\sumord$ weakly extends $\finsum$ (see above remark \ref{finsumremark} for the definition). Fixing an enumeration of elements of $X$ we write distributions as sequences, with entries corresponding to the well-being level of the corresponding position in the enumeration. Now consider $w=\langle1, 0, 0, 0...\rangle$ and $v=\langle 1/2, 1/4, 1/8,...\rangle$. In this case $v \not \finsum w$, however $v \sumord w$.\end{remark}

\noindent This remark further motivates the use of weak extensions (as opposed to extensions) in the notion of a relation being largest. Intuitively, the failure of $\finsum$ to rank $w$ and $v$ as equivalent in this example shows that it is too weak. But an extension of $\finsum$ would be required to preserve this weakness. $\sumord$ can deliver the intuitive verdict precisely because it is only a weak extension of $\finsum$, not an extension.

\noindent We now show that $ \sumord$ is neither maximal nor largest:

\begin{definition}[Sum of Differences and Convergent Divergences ($\blacktriangleq$)]
For all $w, v \in W$, $w \blacktriangleq v$ if and only if
\begin{itemize}
\item[(i)] $w \sumord{} v$ or
\item[(ii)] \begin{itemize}
\item[(a)] $w \not \sumord v$,
\item[(b)] for some $c > 0$ and infinite $A \subseteq X$, for all $x \in A$ $w(x)-v(x)>c$, and
\item[(c)] there is no $d>0$ and infinite $B \subseteq X$ so that for all $x \in B$, $v(x)-w(x)>d$.
\end{itemize}
\end{itemize} 
\end{definition}

\begin{proposition}\label{nonmaximality} Let $W=\mathbb R^X$. 
\begin{enumerate}
\item $\blacktriangleq$ is a preorder satisfying Strong Pareto, Permutation Invariance, and Quasi-Independence.
\item $\blacktriangleq$ strictly weakly extends  $\sumord{}$.
\end{enumerate}
 \end{proposition}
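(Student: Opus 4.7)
For Claim 2, weak extension is immediate from clause (i) of the definition. To witness strictness, I would take $v \equiv 0$ and let $w$ equal $1$ on some infinite set $A \subseteq X$, equal $-1/n$ on a disjoint sequence $\{y_n\}_{n \geq 1}$, and $0$ elsewhere; then $\sum(w-v)$ has both divergent positive part and divergent negative part, so $w \not\sumord v$, while clause (ii) of $\blacktriangleq$ holds with witness $A$ and $c=1/2$ in (b), and (c) holds because for every $d > 0$ only finitely many $y_n$ satisfy $1/n > d$. Turning to Claim 1, reflexivity is immediate from $w \sumord w$. Strong Pareto holds because pointwise domination with strict inequality gives a non-negative sum with a strictly positive term, which either converges unconditionally to a positive real or diverges unconditionally to $+\infty$, hence $w \sumord v$; and $v \not\blacktriangleq w$ because clause (i) fails (the corresponding sum is strictly negative or $-\infty$) and clause (ii)(b) fails ($v - w \leq 0$ everywhere). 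Permutation Invariance is routine: unconditional convergence is preserved under bijective relabelling of $X$, and witness sets for clauses (ii)(b) and (ii)(c) transport along the permutation and its inverse. Quasi-Independence follows from the identity $(\alpha w + (1-\alpha)u) - (\alpha v + (1-\alpha)u) = \alpha (w - v)$, which for $\alpha \in (0,1]$ preserves the relevant structure with thresholds scaled by $\alpha$, and for $\alpha = 0$ collapses to reflexivity.

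The main obstacle is transitivity. Given $w \blacktriangleq v$ and $v \blacktriangleq u$, I would split into four cases according to which clause (i) or (ii) each relation uses. The key observation used throughout is that whenever a pair is related via clause (i) — unconditional convergence or unconditional divergence to $+\infty$ — the negative part of the difference has finite sum, so for every $\epsilon > 0$ the set where the difference is less than $-\epsilon$ is finite. This lets small errors be absorbed when transferring clauses (ii)(b) and (ii)(c) between pairs. Case (i)-(i) reduces to transitivity of $\sumord$, which follows because the negative parts of $(w-v)$ and $(v-u)$ are summable (hence so is that of $(w-u)$), and partial sums of any rearrangement of $(w-u)$ decompose termwise into the corresponding partial sums of $(w-v)$ and $(v-u)$, which jointly tend to a non-negative limit or to $+\infty$. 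In the mixed cases (i)-(ii) and (ii)-(i), I would either obtain $w \sumord u$ directly, or verify clauses (ii)(b) and (ii)(c) for $(w, u)$ by restricting the (ii)-pair's witness to the cofinite subset where the (i)-pair's difference exceeds $-\epsilon$ for appropriate $\epsilon$, and deriving a contradiction to (ii)(c) of the original (ii)-pair from any purported witness to the failure of (ii)(c) for $(w, u)$.

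The deepest case is (ii)-(ii). To verify clause (ii)(b) for $(w, u)$, I would take the witness $A_1$ for $w - v > c_1$ and use clause (ii)(c) for $(v, u)$ — which says $\{x : u(x) - v(x) > c_1/2\}$ is finite — to conclude $v - u \geq -c_1/2$ on a cofinite subset of $A_1$, yielding $w - u > c_1/2$ on an infinite subset. To verify clause (ii)(c) for $(w, u)$, I would assume for contradiction an infinite $B$ and $d > 0$ with $u - w > d$ on $B$ and split: either infinitely many $x \in B$ satisfy $v - w > d/2$, directly contradicting clause (ii)(c) for $(w, v)$; or only finitely many do, in which case $w - v \geq -d/2$ on a cofinite subset of $B$, giving $u - v = (u - w) + (w - v) > d/2$ on an infinite subset, contradicting clause (ii)(c) for $(v, u)$. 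Either horn of the dichotomy yields a contradiction, so $w \blacktriangleq u$ either via clause (i) or via clause (ii).
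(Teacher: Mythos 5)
Your proposal is correct and follows essentially the same route as the paper's appendix proof: the easy axioms are verified the same way, transitivity is handled by the same case split on which clause each relation uses, relying on the same key fact that clause (i) forces the negative part of the difference to be summable (so only finitely many coordinates lie below $-\epsilon$), and your strictness witness has the same difference structure as the paper's example. If anything, your $c_1/2$ and $d/2$ splitting in the (ii)--(ii) subcase spells out details that the paper's proof passes over with ``clearly.''
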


\begin{proof} 
1 is tedious; the proof is relegated to an appendix. For 2, let $A$, $B$ be infinite disjoint subsets of $X$ and fix an enumeration of $B$, $b_1, b_2...$. Let $w(x)=1$ if $x \in A$, and $0$ otherwise. Let $v(b_n)=1/n$, and $v(x)=0$ otherwise. Then $w \blacktriangleq v$, but not $w \sumord v$. 
\end{proof}

\begin{remark} Nothing in this argument depends essentially on the use of $\mathbb R$. For instance, we can use the same technique to show that the relation is not maximal on $\mathbb Q^X$. \end{remark}

\section{Related Work}\label{relatedwork}

\subsection{Inconsistency of Catching Up and Overtaking with Permutation Invariance}\label{comparisons}

Throughout this subsection, fix an enumeration of $X$. The usual definitions of Catching Up and Overtaking are as follows (\citet{atsumi1965neoclassical,weizsacker1965existence,brock1970axiomatic,fleurbaey2003intertemporal,asheim2004resolving}):\footnote{A different use of this terminology can be found in \citet{gale1967optimal}, \citet[\S3]{jonsson2018limit}. We follow the definitions of \citet{pivato2024intergenerational} and \citet{basu2007utilitarianism}.}
\begin{description}
\item[Catching Up] For all $w, v \in W$, $w \succeq_{CU} v$ iff for some $\overline N$, for all $N \geq \overline N$ $$\sum_{x_i: 1 \leq i \leq N} w(x_i) \geq\sum_{x_i: 1 \leq i \leq N} v(x_i)$$
\item[Overtaking] For all $w, v \in W$, \begin{itemize}
\item $w \succ_{O} v$  iff for some $\overline N$ for all $N \geq \overline N$ $\sum_{x_i: 1 \leq i \leq N} w(x_i) > \sum_{x_i: 1 \leq i \leq N} v(x_n);$ 
\item $w \sim_{O} v$  iff for some $\overline N$, for all $N \geq \overline N$ $\sum_{x_i: 1 \leq i \leq N} w(x_i) = \sum_{x_i: 1 \leq i \leq N} v(x_i).$
\end{itemize}
\end{description}

\noindent Both of these preorders were defined using the natural temporal order of generations. It is not surprising, then, that both of them violate Permutation Invariance (see \citet{asheim2010generalized}). In fact they do so, even if $W= \{0,1\}^X$. To see this, let $w=\langle 1, 0, 1, 0, 1, 0 \dots \rangle$, and $v= \langle 0, 0, 0, 1, 0, 1, 0, 1, 0, 1 \dots \rangle$. Then $w \succ_O v$ and $w \succ_{CU} v$. But Permutation Invariance rules out this strict comparison, because there is a permutation $\pi$ such that $\pi(w)=v$ and $\pi(v)=w$ (so that $w\succ v$, would imply $v \succ w$, a contradiction). As discussed earlier, this distinguishes our setting from the setting of these two commonly studies SWRs.

\subsection{Comparison to Other Characterization Results}\label{characterizations}

\citet{basu2007utilitarianism} introduce a ``utilitarian'' social welfare relation, which is weaker than $\finsum$ if $W=\mathbb R^X$ (see \citet[\S 1]{asheim2010generalized}) but coincides in the case where $W=W_F$. \citet[Theorem 1]{basu2007utilitarianism} shows that their preorder is the minimal relation which satisfies Strong Pareto, Finite Anonymity and:
\begin{description}
\item[Partial Unit Comparability] For all $w,v\in W$ if $\{x | w(x)\neq v(x)\}$ is finite, $w\succeq v$, and $u\in \mathbb R^X$ is such that $w+u, y + u\in W$ and $u(x) = 0$ if $w(x)=v(x)$, then $x + u\succeq y + u$. 
\end{description}
 \citet[Theorem 6]{jonsson2015utilitarianism} provide an analogous characterization, using a weaker version of Partial Unit Comparability. Our characterization is on a less rich domain, but we do not use any analogue of Partial Unit Comparability, and our Permutation Invariance is neither necessary nor sufficient for Finite Anonymity. Moreover, our characterization is of the relation as maximal. 

\citet{basu2007utilitarianism} argue that their SWR gives correct results, in spite of yielding greater incomparability by contrast to the more standard Overtaking and Catching Up SWRs. Our result can be seen as supporting their arguments, in an axiomatic framework, by appeal to Permutation Invariance, since we show that no further comparability on $W_F$ is compatible with our axioms.

\citet[Corollary 1]{asheim2010generalized} characterize $\finsum$ with $L=\mathbb R$. They show that it is the \emph{minimal} relation satisfying Finite Anonymity, Finite Pareto (a weakening of Strong Pareto), Partial Unit Comparability (which they call ``Finite Translation Scale Invariance''):
\begin{description}
\item[Time Invariant Preference Continuity (IPC)] For all $w,v\in W$, if there exists is a finite $Z \subset X$ such that, for all finite $Y\subseteq Z$, $(w_Y, v_{X-Y})\succeq y$ then $x\succeq y$ (where $(w_Y, v_{X-Y})$ is defined so that for all $x\in Y$ $(w_Y, v_{X-Y})(x)=w(x)$, and for all other $x$, $(w_Y, v_{X-Y})(x)=v(x)$).
\end{description}
Our characterization is again on a less rich domain, but does not use either Partial Unit Comparability, or IPC, the latter of which especially is a strong axiom. Our result also uses Permutation Invariance, which is incomparable in strength with Finite Anonymity. Finally, we characterize this relation as maximal, not as minimal. As just noted in connection to the arguments of \citet{basu2007utilitarianism}, we think this is of special conceptual interest.

\citet[Theorem 3]{fleurbaey2003intertemporal} and  \citet[Proposition 3]{asheim2004resolving} characterize Catching Up and Overtaking, but as we have noted these SWRs are incompatible with Permutation Invariance.

\subsection{Comparison to Lauwers and Vallentyne}\label{errors}

\citet{lauwers2004infinite} appears to provide a characterization of our preorder in a more general setting, which we now discuss.

\citet[Theorem 5]{lauwers2004infinite} use a principle they call Restricted Transfer. The main idea of the principle is introduced in the following quotation:
\begin{quote} The condition appeals to the notion of a restricted transfer, which is (1) a transfer of a positive amount of value from a location with positive value to a location with negative value such that (2) after the transfer, the donor location still has non-negative value and the recipient location still has non-positive value. For example, the move from $\langle -1, 3\rangle$ to $\langle 0,2\rangle$ is a restricted transfer, but the move from $\langle -1, 3\rangle$ to $\langle 1,1\rangle$ is not. (p. 323)
\end{quote}

Their principle is then as follows.

\begin{description} 
\item[Restricted Transfers:] If locations have no natural structure, then, for any three worlds, $U$, $U^*$, and $V$, if (1) $U$ is better than $V$, and (2) $U^*$ is obtainable from $U$ by some (possibly infinite) number of restricted transfers, then $U^*$ is better than $V$. (p. 323)\end{description}

They claim that the Sum Preorder satisfies Restricted Transfer. But Restricted Transfer is inconsistent with the Sum Preorder, as the following fact establishes.\footnote{\citet[n. 188]{askell2018diss} also observes that Lauwers and Vallentyne's stated principle, which allows transfers from one person with positive welfare, to multiple people with negative welfare is inconsistent with Strong Pareto. But she does not remark that even her restricted principle is inconsistent. The restricted principle is employed in her Result 16, which, while correct, has assumptions that cannot be jointly satisfied.}

\begin{fact}\label{lv} Let $W\supseteq \{-2,0,1, 2\}^X$. The Sum Preorder is inconsistent with Restricted Transfer.\end{fact}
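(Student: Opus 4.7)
My plan is to construct an explicit counterexample: worlds $V$ and $U$ with $U \succ V$ in the Sum Preorder together with a world $U^*$ obtainable from $U$ by an infinite sequence of restricted transfers, such that $U^* \not\sumord V$. This directly falsifies Restricted Transfers for $\sumord$.

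The design idea is to exploit the gap between (a) \emph{finite-support} differences, whose unconditional sum is trivially well-defined, and (b) differences supported on two disjoint infinite sets, where unconditional convergence fails. Specifically, I would partition $X$ into three infinite sets $A = \{a_1, a_2, \dots\}$, $B = \{b_1, b_2, \dots\}$, and $C$, and define
\[
V(a_i) = 2, \quad V(b_i) = -2, \quad V(c) = 0 \text{ on } C,
\]
while letting $U$ agree with $V$ everywhere except $U(b_1) = 0$. Then $U - V$ is nonzero only at $b_1$, so $\sum_{x \in X}(U(x)-V(x)) = 2$ unconditionally and $\sum_{x \in X}(V(x)-U(x)) = -2$ unconditionally, which gives $U \succ V$ in $\sumord$.

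For the transfers I would pair each $a_i$ with $b_{i+1}$ and move $2$ units from $a_i$ (positive value $2$) to $b_{i+1}$ (negative value $-2$); each such step is a legitimate restricted transfer since the donor ends at $0 \geq 0$ and the recipient at $0 \leq 0$. The resulting world $U^*$ after these infinitely many transfers is identically $0$. Then $U^* - V$ equals $-2$ on the infinite set $A$ and $+2$ on the infinite set $B$, so different orderings of $X$ yield different partial-sum limits; in particular, the sum neither converges unconditionally to a nonnegative value nor diverges unconditionally to $+\infty$. Hence $U^* \not\sumord V$ (in fact $U^* \perp V$), contradicting the conclusion of Restricted Transfers, which demands $U^* \succ V$ (or at least $U^*\sumord V$).

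The routine verifications are to check that all values used (namely $-2,0,2$) lie in $\{-2,0,1,2\}$ so every world belongs to $W$, and that each individual step is a bona fide restricted transfer (donor strictly positive before, recipient strictly negative before, both constraints preserved after). The conceptual heart of the argument---and the main obstacle to a naive belief that $\sumord$ satisfies the principle---is the recognition that Lauwers and Vallentyne's principle explicitly permits \emph{infinitely many} transfers, and that an infinite stream of value-preserving local transfers can destroy the unconditional convergence of $U-V$ even though no single transfer does.
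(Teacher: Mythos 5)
Your proposal is correct and follows essentially the same strategy as the paper's proof: exhibit a world strictly above a benchmark in the Sum Preorder, zero it out by infinitely many restricted transfers (each moving $2$ from a $+2$ location to a paired $-2$ location), and observe that the resulting zero world is incomparable with the benchmark because their difference is $+2$ on one infinite set and $-2$ on another. The only cosmetic difference is that your initial strict comparison comes from a single-coordinate improvement, whereas the paper's comes from an infinite (divergent) improvement; the counterexample mechanism is the same.
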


\begin{proof} Let $A$ be an infinite coinfinite subset of $X$. Let $a,b,c\in W$ be defined so that:
\begin{itemize}
\item $a(x)=2$ if $x \in A$ and $a(x)=-2$ otherwise.
\item $b(x)=1$ if $x \in A$ and $b(x)=-2$ otherwise
\item $c(x)=0$ for all $x \in X$.
\end{itemize}

$a \succ_{SO} b$. By Restricted Transfer it would follow that $c \succ_{SO} b$.  But $b \perp_{SO} c$.
 \end{proof}

Lauwers and Vallentyne's proof of their Theorem 5 establishes a different fact than the consistency of the Sum Preorder with Restricted Transfers, and suggests that the proof was intended to demonstrate a somewhat different claim. The condition they seem to have in mind is as follows (where $\mathbf 0$ is the constant function from $X$ to $0$):

\begin{description}
\item[Restricted Transfers (Corrected):] If locations have no natural structure, then, for any three worlds, $U$, and $U^*$, if (1) $U$ is better than $\mathbf 0$, and (2) $U^*$ is obtainable from $U$ by some (possibly infinite) number of restricted transfers, then $U^*$ is better than $\mathbf 0$.\footnote{Thanks to an anonymous referee for suggesting this criterion, which allowed us to greatly simplify our discussion here.}
\end{description}

Given this much weaker principle, which applies only to comparisons to ``the zero world'', the proof of their result goes through. The corrected result provides a characterization of the Sum Preorder, using the following additional axioms:
\begin{description}
\item[Weak Pareto] For all $w, v \in W$ if for all $x \in X$ $w(x) \geq v(x)$ then $w \succeq v$.
\item[Zero Independence] For all $w, v \in W$, $w \succeq v$ if and only if $w-v \succeq \mathbf 0$.
\item[Sum] For all $w, v \in W$ if $\sum_{ x\in X} w(x)$ and $\sum_{x \in X} v(x)$ are unconditionally convergent, then $w \succeq v$ iff $\sum_{ x\in X} w(x) \geq \sum_{x \in X} v(x)$.
\end{description}

Zero Independence is a significant strengthening of Partial Unit Comparability, above.

\begin{proposition}[\citet{lauwers2004infinite}, Theorem 5 corrected] The Sum Preorder is the unique preorder satisfying Restricted Transfers (Corrected), Weak Pareto, Zero Independence, and Sum.\end{proposition}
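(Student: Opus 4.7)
The plan has two parts: first verify that $\sumord$ satisfies the four axioms, then show that any preorder $\succeq'$ satisfying them coincides with $\sumord$. Verifying $\sumord$ satisfies the axioms is essentially routine. Weak Pareto holds because a pointwise non-negative difference $w-v$ sums unconditionally to either a non-negative real or $+\infty$. Zero Independence is immediate from the fact that the definition of $\sumord$ depends only on $w-v$. Sum follows because when both $\sum w$ and $\sum v$ converge unconditionally, so does $\sum(w-v)$ to $\sum w-\sum v$. For Restricted Transfers (Corrected), each single restricted transfer preserves the total sum and satisfies $|U^*(x)|\le|U(x)|$ pointwise, so unconditional absolute convergence of $\sum U$ passes to $\sum U^*$ with the same limit; and when $\sum U$ diverges unconditionally to $+\infty$, the total negative part of $U$ is finite, so only finitely much positive mass of $U$ can be moved, leaving $\sum(U^*)^+=+\infty$ and $\sum(U^*)^-$ finite.

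For uniqueness, let $\succeq'$ satisfy the axioms. By Zero Independence it suffices to show $w\succeq'\mathbf 0$ iff $w\sumord\mathbf 0$. For the ``if'' direction, when $\sum w$ is unconditionally convergent to some $r\ge 0$ I would apply Sum directly. When $\sum w=+\infty$ unconditionally, I would decompose $w=w^+-w^-$ with disjoint supports and $\sum w^-<\infty$, pick a finite-support truncation $w^+_n\le w^+$ with $\sum w^+_n\ge\sum w^-$, apply Sum to get $w^+_n\succeq' w^-$ and Weak Pareto to get $w^+\succeq' w^+_n$, combine by transitivity, and conclude $w\succeq'\mathbf 0$ by Zero Independence (using that $w\succeq'\mathbf 0$ iff $w^+\succeq' w^-$).

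For the ``only if'' direction I would argue the contrapositive in three sub-cases. The case $\sum w<0$ unconditionally convergent is immediate from Sum. In the case $\sum w=-\infty$ unconditionally, applying the ``if'' direction to $-w$ yields $\mathbf 0\succeq' w$; if one further assumed $w\succeq'\mathbf 0$, Zero Independence would give $w^+\sim' w^-$, but a finite-support truncation $w^-_n\le w^-$ with $\sum w^-_n>\sum w^+$ would satisfy $w^-_n\succ' w^+$ by Sum and $w^-\succeq' w^-_n$ by Weak Pareto, yielding the transitivity contradiction $w^-_n\succ' w^-_n$.

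The remaining and hardest case is the oscillating one, $\sum w^+=\sum w^-=+\infty$. Here the proof idea is to construct $w^*$ as the pointwise limit of a countable sequence of restricted transfers from $w$ that depletes every positive location $p_i$ (where $w(p_i)=a_i$) and leaves each negative location $n_j$ (where $w(n_j)=-b_j$) at $-d_j\in[-b_j,0]$ for a pre-chosen summable sequence with $\sum d_j=D<\infty$. The resulting $w^*\le\mathbf 0$ is absolutely summable with $\sum w^*=-D<0$, so Sum yields $w^*\not\succeq'\mathbf 0$, contradicting the conclusion $w^*\succeq'\mathbf 0$ from Restricted Transfers (Corrected) applied to the hypothesis $w\succeq'\mathbf 0$. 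The main technical challenge is to realize the required bipartite flow $t_{ij}\ge 0$ (with $\sum_j t_{ij}=a_i$ and $\sum_i t_{ij}=b_j-d_j$) as a single countable sequence of individually valid restricted transfers whose pointwise limit is $w^*$; this can be handled by a diagonal interleaving over the $(i,j)$ pairs with $t_{ij}>0$ that keeps each source strictly positive until its last transfer and each destination strictly negative until its last, so that the validity condition of each single transfer is preserved at every step.
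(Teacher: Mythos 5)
Your argument reconstructs, in essentially the intended form, the Lauwers--Vallentyne proof that the paper defers to (the paper gives no proof of its own beyond observing that their published argument establishes the corrected claim): Zero Independence reduces everything to comparisons with $\mathbf 0$; Sum and Weak Pareto settle the cases where $\sum_x w(x)$ is unconditionally convergent or unconditionally divergent; and Restricted Transfers (Corrected) is used exactly to dispose of the oscillating case $\sum_x w^+(x)=\sum_x w^-(x)=\infty$, by transferring all positive mass onto the negative locations while leaving a summable negative residue, which Sum then ranks strictly below $\mathbf 0$. Your verification that $\sumord$ satisfies the four axioms is also correct, and your worry about sequencing the transfers is a non-issue: once a nonnegative flow $t_{ij}$ with row sums $a_i$ and column sums $b_j-d_j$ exists (a greedy, northwest-corner construction gives one, using $\sum_i a_i=\sum_j(b_j-d_j)=\infty$), any enumeration of the pairs with $t_{ij}>0$ yields individually valid restricted transfers, since before each transfer the cumulative outflow from its donor is strictly less than $a_i$ (so the donor's value is still positive) and the cumulative inflow to its recipient is strictly less than $b_j-d_j$ (so the recipient's value is still negative), and every location converges to its intended terminal value.

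One point needs attention. As stated in the paper (following Lauwers and Vallentyne), Restricted Transfers (Corrected) is a condition on strict betterness: if $U$ is better than $\mathbf 0$, then $U^*$ is better than $\mathbf 0$. In the oscillating case you apply it to the weak hypothesis $w\succeq'\mathbf 0$, which leaves the sub-case $w\sim'\mathbf 0$ uncovered, and Weak Pareto cannot upgrade weak to strict preference. The gap is small and closable: Zero Independence transfers strictness ($w\succ' v$ iff $w-v\succ'\mathbf 0$), so if $w\sim'\mathbf 0$, pick a finite-support $e$ with $\sum_x e(x)>0$; by Sum $e\succ'\mathbf 0$, hence $w+e\succ' w\succeq'\mathbf 0$ and so $w+e\succ'\mathbf 0$, while $w+e$ is still oscillating; applying your transfer construction to $w+e$ and invoking Sum on the resulting nonpositive world with strictly negative total then yields the contradiction. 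With that adjustment (or under an explicitly weak reading of the axiom), the proof is complete.
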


The proof they give in their paper is a proof of this claim.

Unlike characterizations more standard in the economics literature, this result does not appeal to minimality of the relevant SWR. In this sense it is a precedent for our work. It goes further than our work in giving a characterization of the order on the whole of $\mathbb R^X$. However, we do not see how to motivate Restricted Transfer (Corrected) in a way that is independent of the motivations for the Sum Preorder itself. Indeed, it seems to us that Restricted Transfer (Corrected) simply restates the idea that $w \succeq v$ only if the sum of the positive values of $w-v$ ``outweighs'' the sum of the positive values of $v-w$. By contrast, Quasi-Independence can be independently motivated. (Note that we also do not use a version of Zero Independence.)

\section{Conclusion}\label{conclusion}

This paper has studied the implications of the conjunction of Strong Pareto and Permutation Invariance in infinite populations, where they imply the incompleteness of social preferences. We have asked whether, and under what conditions, there are largest SWRs satisfying these two axioms. With two welfare levels, the Sum Preorder is the largest SWR satisfying Strong Pareto and Permutation Invariance (Proposition \ref{twovalued}). With finite-valued worlds, the Sum Preorder is the largest SWR satisfying Strong Pareto, Permutation Invariance, and a further Quasi-Independence axiom (Theorem \ref{discrete}).

We see at least three directions for future work.

First, most obviously, we have not resolved the question of whether, if $W=\mathbb R^X$ there exists a largest preorder which satisfies our axioms. We have provided some partial results on this question, and shown that the Sum Preorder is not largest or even maximal on this set. We hope our results here will spur further work on the topic.

Second, our main result effectively assumes a utilitarian criterion for ranking worlds which differ on finite sets of individuals. This is required by Strong Pareto, Permutation Invariance and Quasi-Independence (as shown by Lemma \ref{lemma1}, proved below in the appendix). It is an interesting question whether various egalitarian conditions on ranking such finite-difference worlds (e.g. those studied in \citet{asheim2010generalized}) could also be the subject of analogous maximality results, which would require alternative axioms in place of Quasi-Independence.

Third, there are a wide array of questions about the ``size'' of the relation we have established as largest on $W_F$. For instance: what, in a natural measure on the space of pairs of worlds, is the measure of comparable pairs? Is this set dense topologically? We have not begun to address these questions here.

\begin{appendices}

\section{Proofs}

 \subsection{Proof of Proposition \ref{twovalued}}
 
 \begin{proposition*} Let $W= \{0,1\}^X$. $\sumord$ is the largest preorder satisfying both \p{} and \op{}.\end{proposition*}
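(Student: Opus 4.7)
The plan has two parts. First I would verify that $\sumord$ itself is a preorder satisfying Strong Pareto and Permutation Invariance: reflexivity and the two axioms are immediate from the definition (Permutation Invariance follows from the fact that unconditional convergence and divergence are insensitive to reindexing), and transitivity reduces to a routine case analysis based on whether the relevant symmetric differences are finite or infinite, using that $\sum(w_1-w_3) = \sum(w_1-w_2) + \sum(w_2-w_3)$ whenever the right-hand side is defined in the extended reals.

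The substantive direction is to show that $\sumord$ weakly extends every preorder $\succeq$ on $\{0,1\}^X$ satisfying Strong Pareto and Permutation Invariance. I would argue the contrapositive: assuming $w\not\sumord v$, derive $w\not\succeq v$. Writing $A=w^{-1}(1)$, $B=v^{-1}(1)$, $A^*=A\setminus B$, $B^*=B\setminus A$, $I=A\cap B$, and $C=X\setminus(A\cup B)$, the failure of $w\sumord v$ splits into three cases: (a) $|A^*|<|B^*|$ both finite, (b) $|A^*|<\infty=|B^*|$, or (c) $|A^*|=|B^*|=\infty$. (If $|A^*|=0$ in cases (a) or (b) then $v\gneq w$, so $v\succ w$ by Strong Pareto and hence $w\not\succeq v$ directly.) The uniform technique is to construct an involution $\pi$ of $X$ with $\pi(B)\supsetneq A$, equivalently $\pi(v)\gneq w$. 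Given such a $\pi$, assume for contradiction that $w\succeq v$: Permutation Invariance yields $\pi(w)\succeq\pi(v)$; Strong Pareto yields $\pi(v)\succ w$; transitivity gives $\pi(w)\succ w$; and a second application of Permutation Invariance (using $\pi^2=\mathrm{id}$) yields $w=\pi(\pi(w))\succ\pi(w)$, contradicting $\pi(w)\succ w$.

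Constructing $\pi$ is the heart of the argument. I would enumerate $A^*=\{a_1,a_2,\ldots\}$ and $B^*=\{b_1,b_2,\ldots\}$ and fix $I$ pointwise. In cases (a) and (b), swap $a_i\leftrightarrow b_i$ for $i\leq|A^*|$ and fix everything else; then $\pi(B)=A\cup(B^*\setminus\{b_1,\ldots,b_{|A^*|}\})\supsetneq A$, since the residual set has size $|B^*|-|A^*|\geq 1$. In case (c), pair $a_i\leftrightarrow b_{i+1}$ for all $i\geq 1$ and either swap $b_1\leftrightarrow c$ for some $c\in C$ (if $C\neq\emptyset$) or simply fix $b_1$ (if $C=\emptyset$); in either sub-case $\pi(B)$ equals $A\cup\{c\}$ or $A\cup\{b_1\}$, strictly containing $A$. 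The main obstacle is the subcase of (c) with $A\cup B=X$: one cannot draw an extra element from outside $A\cup B$, and must instead exploit $|B^*|=\infty$ to leave $b_1$ fixed while still bijectively pairing $A^*$ with $\{b_2,b_3,\ldots\}$. Once the involution is constructed in each case, the rest of the argument is bookkeeping.
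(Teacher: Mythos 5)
Your argument is correct, and its engine is the same as the paper's: a permutation that sends the support $B$ of $v$ onto a proper superset of the support $A$ of $w$, combined with \p{}, \op{} and transitivity to produce a preference cycle. The differences are structural and worth noting. The paper splits the maximality argument in two: for $|A\setminus B|$ finite and smaller than $|B\setminus A|$ it builds a swap permutation $f$ and derives the four-term cycle $f(A)\sqsupseteq f(B)\sqsupset A\sqsupseteq B\sqsupset f(A)$; for $|A\setminus B|=|B\setminus A|=\infty$ it falls back on the argument of Proposition \ref{incompleteness}, where the natural swap gives $\pi(w)=v$ exactly, so indifference must be excluded separately by introducing an auxiliary world $w^-$ and a second permutation. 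Your single construction unifies all three cases: by pairing $a_i\leftrightarrow b_{i+1}$ (leaving $b_1$, or an element of $X\setminus(A\cup B)$, as slack) you arrange $\pi(B)\supsetneq A$ even when both differences are infinite, and since $\pi$ is an involution the contradiction needs only the two-step cycle $\pi(w)\succ w$ and $w=\pi(\pi(w))\succ\pi(w)$, with no auxiliary world and using only one of the two containments the paper exploits. The trade-off is minor: the paper's route reuses Proposition \ref{incompleteness} and so needs no new construction for the incomparability case, while yours is self-contained and handles weak preference directly in that case; both deliver the same conclusion, and your verification that $\sumord$ itself satisfies the axioms on $\{0,1\}^X$ is the same routine check the paper leaves implicit.
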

 
\noindent We prove the proposition by considering a simpler preorder, which is equivalent on this restricted domain of worlds:
 
 \begin{definition}[Counting Preorder ($\countord$)] Let $L=\{0,1\}$, $w, v \in W$, and $1_w=\{x | w(x)=1\}$, $1_v=\{x | v(x)=1\}$. $w  \countord{} v$ if and only if $|1_w-1_v| \geq |1_v-1_w|$ and the latter is finite.\footnote{This can be extended to higher cardinalities of infinity by adding a second disjunct: that $w$ is at least as good as $v$ iff either the above condition is satisfied or $|1_w-1_v|>|1_w-1_v|$.}\end{definition}
 
 \begin{remark} If $W=\{0,1\}^X$, $\countord=\finsum= \sumord$.\end{remark}

\begin{proof} Any transitive reflexive relation $\sqsupseteq$ (with strict relation $\sqsupset$)  which holds between sets not included in $\countord$ will violate either \p{} or \op{}. This shows that every reflexive and transitive relation satisfying \p{} and \op{} is a subset of our relation and hence that $\countord$ is largest. 

The proof of proposition \ref{incompleteness} already shows that if $|A-B|$ and $|B-A|$ are both infinite, then they cannot be comparable in such a relation consistently with our assumptions. So it only remains to show that we cannot have $A\sqsupseteq B$ while $|A-B|<|B-A|$, with $|A-B|$ finite.  

Suppose for reductio that $A\sqsupseteq B$ while $|A-B|<|B-A|$ and $|A-B|$ is finite. Since $A-B$ is disjoint from $B-A$ by definition, and given that $|A-B|$ is finite and strictly less than $|B-A|$, there is a permutation $f$ which moves every element of $A-B$ to an element of $B-A$ and moves the relevant element of $B-A$ to its preimage in $A-B$, while mapping every other $x \in X$ to itself. In particular, writing $A \cap B$ as $O$ and allowing permutations to act on sets in the obvious way, we have that $f(O)=O$ (because for all $x \in O$, $f(x)=x$). By \op{}, given that $A \sqsupseteq B$, $f(A) \sqsupseteq f(B)$, or equivalently $f(A-B) \cup O \sqsupseteq f(B-A) \cup O$. But given the choice of $f$, $f(B-A)$ is a strict superset of $A-B$, and so $f(B-A) \cup O$ (that is, $f(B)$), is a strict superset of $(A-B) \cup O$, that is, $A$. As a consequence, \p{} implies that $f(B) \sqsupset A$. Similarly, $f(A-B)$ is a strict subset of $B-A$, and so $(f(A-B) \cup O)$ (that is, $f(A)$) is a strict subset of $((B-A) \cup O)$, that is, $B$. As a consequence, \p{} implies that $B \sqsupset f(A)$. Putting this together, we have $f(A) \sqsupseteq f(B) \sqsupset A \sqsupseteq B \sqsupset f(A)$, which implies (given transitivity and irreflexivity of $\sqsupset$, which follows by definition given the transitivity and reflexivity $\sqsupseteq$) that $f(A) \sqsupset f(A)$, contradicting the reflexivity of $\sqsupseteq$. \end{proof}

\begin{remark} An easy generalization of this argument shows that if $\succeq$ satisfies \p{} and \op{} then for any infinite, disjoint $A$, $B$ and real $k,l$ with $k\neq l$, if for all $x \in A$, $w(x)=k$, and is otherwise $l$, while for all $x \in B$ $v(x)=k$ and is otherwise $l$, $w \perp v$.\end{remark} 

\subsection{Proof of Theorem \ref{discrete}}

The theorem is proved in three steps.

\subsubsection{First Step: Partial Utilitarianism From Quasi-Independence}

The first main lemma shows that Quasi-Independence (against the background of our other assumptions) implies that any comparability in a preorder respects the sums of finite differences.

Before stating that Lemma, we first prove that Quasi-Independence implies a slightly easier to use formulation:

\begin{lemma}\label{convdom} Let $W = W_F$. If $\succeq$ is a preorder satisfying Axiom 3 (Quasi-Independence) then it satisfies:
\begin{description}
\item[Convex Dominance] For any $w_1, \dots, w_n, v_1, \dots, v_n \in W$, if for all $i$ with $1 \leq i \leq n$ $w_i \succeq v_i$, then for any $\alpha_1,\dots,\alpha_n$ such that $\sum_{1\leq i\leq n} \alpha_i =1$, $\sum_{1\leq i\leq n} \alpha_i w_i \succeq \sum_{1\leq i\leq n} \alpha_i v_i$. 
\end{description}
\end{lemma}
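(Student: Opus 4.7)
The plan is to prove Convex Dominance by induction on $n$, using Quasi-Independence twice per inductive step, together with transitivity of $\succeq$. Throughout, the $\alpha_i$ should be read as nonnegative weights summing to $1$, so that the expressions are genuine convex combinations and remain in $W_F$ (since $W_F$ is closed under finite convex combinations, as noted in the paragraph following the statement of Quasi-Independence).

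The base case $n=1$ is immediate: $\alpha_1 = 1$ forces the conclusion $w_1 \succeq v_1$. For the inductive step, assume the claim for $n-1$. Given $w_i \succeq v_i$ for $1 \leq i \leq n$ and nonnegative weights $\alpha_1,\dots,\alpha_n$ summing to $1$, the cases $\alpha_n \in \{0,1\}$ reduce immediately to the inductive hypothesis or the hypothesis $w_n \succeq v_n$. Otherwise set $\beta_i = \alpha_i/(1-\alpha_n)$ for $i \leq n-1$, so that the $\beta_i$ are nonnegative and sum to $1$, and put $W' = \sum_{i=1}^{n-1} \beta_i w_i$ and $V' = \sum_{i=1}^{n-1} \beta_i v_i$, both of which lie in $W_F$. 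By the inductive hypothesis $W' \succeq V'$.

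I would then introduce the intermediate world
\[
B \;=\; (1-\alpha_n)\,V' + \alpha_n w_n \;=\; \sum_{i=1}^{n-1} \alpha_i v_i + \alpha_n w_n,
\]
and observe two applications of Quasi-Independence. First, applying Quasi-Independence to $W' \succeq V'$ with $u = w_n$ and coefficient $\alpha = 1-\alpha_n$ yields
\[
(1-\alpha_n)W' + \alpha_n w_n \;\succeq\; (1-\alpha_n)V' + \alpha_n w_n,
\]
i.e.\ $\sum_i \alpha_i w_i \succeq B$. Second, applying Quasi-Independence to $w_n \succeq v_n$ with $u = V' \in W_F$ and coefficient $\alpha = \alpha_n$ yields
\[
\alpha_n w_n + (1-\alpha_n)V' \;\succeq\; \alpha_n v_n + (1-\alpha_n)V',
\]
i.e.\ $B \succeq \sum_i \alpha_i v_i$. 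Transitivity closes the inductive step.

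I do not anticipate a serious obstacle: the argument is essentially a standard bookkeeping induction that pushes a single-pair version of Independence up to convex combinations of arbitrary length. The only points requiring care are (i) handling the degenerate cases $\alpha_n \in \{0,1\}$ so that the rescaling by $1-\alpha_n$ is well-defined, and (ii) verifying that each intermediate object to which Quasi-Independence is applied lies in $W_F$, which follows because finite convex combinations of finite-valued worlds are again finite-valued.
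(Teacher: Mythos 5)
Your proof is correct and is exactly the ``easy induction'' the paper has in mind: the paper's own proof of this lemma consists only of the remark that it follows by an easy induction, and your two applications of Quasi-Independence per step (with the intermediate world $B$) plus transitivity is the standard way to carry it out. The care you take with the degenerate weights and with closure of $W_F$ under convex combinations is appropriate and matches the paper's reading of the $\alpha_i$ as nonnegative weights summing to one.
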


\begin{proof} The proof is by an easy induction. 
\end{proof}

In what follows we sometimes directly invoke Convex Dominance.

The main lemma of this stage is as follows:

\begin{lemma}\label{lemma1} Let $W=W_F$. If a preorder $\succeq$ satisfies Axioms 1-3, then if $\{ x | w(x) \neq v(x)\}$ and $w \succeq v$, then  $\sum_{x\in X} w(x)-v(x) \geq 0$. 
\end{lemma}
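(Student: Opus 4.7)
The plan is to argue by contradiction via a symmetrization over the finite difference set $D := \{x : w(x) \neq v(x)\}$. Suppose $w \succeq v$ yet $s := \sum_{x \in X}(w(x) - v(x)) = \sum_{x \in D}(w(x)-v(x)) < 0$; the goal is to construct worlds $\overline W, \overline V$ with $\overline W \succeq \overline V$ yet $\overline V$ strictly Pareto-dominating $\overline W$, contradicting Strong Pareto.

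To do so, let $S(D)$ denote the set of permutations of $D$, each extended to $X$ by the identity outside $D$, so $|S(D)| = |D|!$. By Permutation Invariance applied to $w \succeq v$, we have $\pi(w) \succeq \pi(v)$ for every $\pi \in S(D)$. Applying Convex Dominance (Lemma \ref{convdom}) with uniform weights $1/|D|!$, define
\[
\overline W := \frac{1}{|D|!}\sum_{\pi \in S(D)} \pi(w), \qquad \overline V := \frac{1}{|D|!}\sum_{\pi \in S(D)} \pi(v),
\]
so that $\overline W \succeq \overline V$. Since $w, v \in W_F$ and $D$ is finite, both $\overline W$ and $\overline V$ lie in $W_F$.

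A routine symmetry calculation---each $y' \in D$ is the image of any fixed $y \in D$ under exactly $(|D|-1)!$ permutations in $S(D)$---gives, for $y \in D$, $\overline W(y) = \frac{1}{|D|}\sum_{y' \in D} w(y')$ and $\overline V(y) = \frac{1}{|D|}\sum_{y' \in D} v(y')$, while $\overline W(y) = w(y) = v(y) = \overline V(y)$ for $y \notin D$. Thus $\overline V - \overline W$ equals the strictly positive constant $-s/|D|$ on $D$ and $0$ off $D$, so by Strong Pareto $\overline V \succ \overline W$, contradicting $\overline W \succeq \overline V$. Hence $s \geq 0$.

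The key insight---and the only step that takes any thought---is the symmetrization itself: averaging $w$ and $v$ uniformly over all permutations of $D$ distributes the net deficit $-s$ evenly across $D$, turning a hypothesized negative-sum finite difference into a uniformly positive Pareto advantage that Strong Pareto must detect. Given this, invoking Convex Dominance and verifying that the averages remain in $W_F$ are routine.
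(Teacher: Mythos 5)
Your proof is correct and takes essentially the same approach as the paper: symmetrizing $w$ and $v$ over all permutations of the finite difference set via Permutation Invariance and Convex Dominance, then applying Strong Pareto to the resulting averaged (constant-on-$D$) worlds. The only cosmetic difference is that you frame it as a contradiction while the paper derives the sum inequality directly.
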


This result can be generalized to any subset of $\mathbb R^X$ (including the whole set), provided the set is closed under convex combinations and finite permutations.

\begin{proof} Let $w, v$ be worlds such that $Z= \{ x | w(x) \neq v(x) \}$ is finite. Let $\Pi_Z=\{ \pi | \pi$ permutes $Z$ and fixes $X-Z\}$. $\Pi_Z$ is finite; let $n$ be the number of its members. Fix an enumeration of its elements as $\pi_1 \dots \pi_n$. Let $w_1=\pi_1(w)$, $w_2= \pi_2(w)$ and so on, and similarly for $v$. Let $L_w$ be $\sum_{1 \leq i \leq n} 1/n w_i$ and $L_v$ be $\sum_{1 \leq i \leq n} 1/n w_i$. Note that, if $x \in Z$, $L_w(x)=\frac{ \sum_{x \in Z} w(x)}{|Z|}$ and otherwise $L_w(x)=w(x)$. Similarly if $x \in Z$, $L_v(x)=\frac{ \sum_{x \in Z} v(x)}{|Z|}$ and otherwise $L_v(x)=v(x)$.  By Strong Pareto, $L_w \succ L_v$ iff $\sum_{x \in Z} w(x)> \sum_{x \in Z} v(x)$, and $L_v \succ L_w$ iff $\sum_{x \in Z} w(x)> \sum_{x \in Z} v(x)$, while (by reflexivity) $L_w \sim L_v$ iff $\sum_{x \in Z} w(x)= \sum_{x \in Z} v(x)$ (since then they are the same distribution). 

By Permutation Invariance, if $w \succeq v$ then $w_i \succeq v_i$. By Convex Dominance, if $w \succeq v$, then $L_w \succeq  L_v$. So if $w \succeq v$, then $\sum_{x \in Z} w(x)\geq \sum_{x \in Z} v(x)$ as required. 

\end{proof}

\subsubsection{Second Step: The Intermediate Value Lemma}

The next Lemma contains the key idea of the theorem.

\begin{lemma}[Intermediate Value Lemma] Let $W= W_F$. Suppose $\succeq$ satisfies Axioms 1-3, $w, v \in W$ and there is an infinite $A \subseteq X$ such that:
\begin{itemize}
\item There are some $h, l \in \mathbb R$ with $h>l$ and infinite disjoint subsets of $A$, $H, L$ such that for all $x \in H$ $w(x)=h$ and for all $x \in L$ $w(x)=L$
\item For some $k \in \mathbb R$, for all $x \in A$ $v(x)=k$.
\end{itemize}
For any $m\in (l, h)$ let $w_m$ be such that $x \in A$ $w_m(x)=m$, while for all $x \in X - A$ $w_m(x)=w(x)$. Then if $w \succeq v$ then $w_m \succeq v$, and if $w_m \preceq v$ then $w_m \preceq v$. \end{lemma}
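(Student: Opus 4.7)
The plan is to express $w_m$ as a uniform convex combination $\tfrac{1}{q}\sum_{i=1}^{q}\pi_i(w)$ of finitely many permuted copies of $w$, where each $\pi_i$ is a permutation of $X$ fixing $X\setminus A$. Granted such a representation, the lemma closes immediately: because $v$ is constant on $A$ and each $\pi_i$ fixes $X\setminus A$, we have $\pi_i(v)=v$ for every $i$, so Permutation Invariance (Axiom~2) promotes $w\succeq v$ to $\pi_i(w)\succeq v$ for each $i$, and Convex Dominance (Lemma~\ref{convdom}, which follows from Quasi-Independence) then gives $w_m = \tfrac{1}{q}\sum_i\pi_i(w) \succeq \tfrac{1}{q}\sum_i v = v$. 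The direction $w\preceq v \Rightarrow w_m\preceq v$ is obtained symmetrically, from $v=\tfrac{1}{q}\sum_i v \succeq \tfrac{1}{q}\sum_i\pi_i(w)=w_m$.

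First I would handle rational $\alpha=p/q\in(0,1)$ with $m=\alpha h+(1-\alpha)l$. The required pointwise identity $\tfrac{1}{q}\sum_i \pi_i(w) = w_m$ is automatic off $A$, and on $A$ amounts to requiring that for each $x\in A$ the multiset $\{w(\pi_i(x)):1\le i\le q\}$ consist of exactly $p$ copies of $h$ and $q-p$ copies of $l$, since then $ph+(q-p)l=qm$. In the sub-case $A=H\sqcup L$, I would partition each of the countably infinite sets $H$ and $L$ into $q$ infinite pieces and define the $\pi_i$'s as disjoint unions of bijections between these pieces and appropriate pieces of $A$, arranged so that at every $x\in A$ exactly $p$ of the $\pi_i(x)$ land in $H$. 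For the general case $A\supsetneq H\sqcup L$, I would first perform a preliminary averaging step in which $p$ permutations swap $A^*:=A\setminus(H\cup L)$ bijectively with pairwise disjoint subsets of $H$ (leaving an infinite residual in $H$ untouched) and $q-p$ further permutations swap $A^*$ with pairwise disjoint subsets of $L$ (again leaving an infinite residual). Their average equals $m$ on $A^*$, equals $w$ off $A$, and still has infinite $h$-valued and $l$-valued sets inside $H\cup L$, so the $A=H\sqcup L$ construction applied with $H\cup L$ in place of $A$ completes the reduction.

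For irrational $\alpha$, density of $\mathbb{Q}$ in $(0,1)$ combined with Strong Pareto closes the gap. Assuming $w\succeq v$, I would pick rational $\alpha'\in(0,\alpha)$; then $m':=\alpha' h+(1-\alpha')l$ lies strictly between $l$ and $m$, the rational case yields $w_{m'}\succeq v$, and since $w_m$ and $w_{m'}$ agree off $A$ while $w_m(x)=m>m'=w_{m'}(x)$ on the infinite set $A$, Strong Pareto gives $w_m\succ w_{m'}\succeq v$, whence $w_m\succeq v$. The reverse implication is symmetric, approximating $\alpha$ from above by rational $\alpha'>\alpha$ and using $v\succeq w_{m'}\succ w_m$.

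The hard part will be the combinatorial construction of the permutations in the rational case, most delicately when $A\neq H\sqcup L$: one must simultaneously ensure that each $\pi_i$ is a genuine bijection of $X$ fixing $X\setminus A$, that the prescribed multiset condition is realized at every $x\in A$, and that enough of $H$ and $L$ is preserved in its original $\{h,l\}$-valued form for the follow-up application of the $A=H\sqcup L$ case to go through.
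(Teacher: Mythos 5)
Your argument for the core case $A=H\sqcup L$ is correct, and that is the only case actually invoked when the lemma is used in the proof of Theorem \ref{discrete}; but it takes a more laborious route than the paper's. The paper never aims at an exact representation of $w_m$: it picks $n$ with $(h-l)/n+l<m$, averages $w$ with $n-1$ permuted copies in which $H$ is swapped with infinite pieces of $L$, notes that this uniform average lies pointwise strictly below $m$ on $H\cup L$ and agrees with $w_m$ off $A$, and then climbs from the average to $w_m$ by Strong Pareto before applying transitivity. Because only a pointwise underestimate is needed, every $m\in(l,h)$ is handled at once, so the paper needs neither your exact multiset combinatorics for rational $\alpha$ nor the density-plus-Pareto step for irrational $\alpha$. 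Your version of those two steps is sound (the cyclic-shift permutations on $\{1,\dots,q\}\times\mathbb{N}$ realize the required ``exactly $p$ of the $\pi_i(x)$ land in $H$'' condition, and the approximation argument for irrational $\alpha$ is fine), it just buys nothing beyond the paper's shortcut in this case.

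The genuine gap is your reduction for $A\supsetneq H\sqcup L$. After the preliminary averaging, the subsets of $H$ and $L$ that were swapped with $A^{*}=A\setminus(H\cup L)$ carry values imported from $w$ on $A^{*}$, so the intermediate world restricted to $H\cup L$ is not ``$h$ on one infinite set, $l$ on the complementary infinite set'': the hypothesis of your special case fails for $H\cup L$ in place of $A$, and you are back in the general case rather than done. Moreover the exact-representation goal is provably unattainable there: take $w=\sqrt{2}$ on $A^{*}$, $1$ on $H$, $0$ on $L$ and off $A$, and $m=\tfrac{1}{2}$. Any permutation fixing $X\setminus A$ permutes $A$, hence maps some $y\in A$ onto $A^{*}$; at such a $y$ any convex combination with rational (e.g.\ uniform, or iterated uniform) weights of permuted copies of $w$ takes a value of the form $r\sqrt{2}+r'$ with rational $r>0$, $r'\geq 0$, which is irrational and so never equals $w_m(y)=\tfrac{1}{2}$. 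The repair is to import the Pareto device into the construction itself: since $w$ is finite-valued and hence bounded, one can arrange the permutations so that each point of $A$ receives a value from $A^{*}$ under at most one $\pi_i$ and receives $h$ under only a small fraction of them, making the uniform average pointwise below $m$ on all of $A$ for $q$ large, and then pass to $w_m$ by Strong Pareto. (In fairness, the paper's own proof also silently assumes $A=H\cup L$: its claim that the average equals $(h-l)/n+l$ for all $x\in A$ fails on $A\setminus(H\cup L)$, where the average equals $w(x)$; so an alternative fix is simply to state the lemma with $A=H\sqcup L$, which is all the main theorem needs.)
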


\begin{proof}
We consider just the case where $w \succeq v$, since the case of $w \preceq v$ is exactly parallel.

Let $n$ be the smallest natural number such that $(h-l)/n+l<m$, and consider a partition of $L$ into $n-1$ infinite disjoint sets, which we label $L_1\dots L_{n-1}$. For all $i$ with $1\leq i \leq n-1$ let $\pi_i$ be a permutation which induces a bijection between $L_i$ and $H$, while for all $x \in X- (H \cup L_i)$, $\pi_i (x)=0$. (The existence of such a permutation is guaranteed by the Axiom of Choice.) Let $\pi_n$ be the identity permutation. For all $i$ with $1 \leq i \leq n$, let $w_i$ be $\pi_i(w)$. Note that for all such $i$, $\pi_i(v)=v$. By Permutation Invariance, and this fact, for all $1 \leq i \leq n$, $w_i \succeq v$. By Convex Dominance, with $\alpha_i = 1/n$, $\sum_{1\leq i\leq n} \alpha_i w_i \succeq v$. For all $x \in A$, $(\sum_{1\leq i\leq n} \alpha_i w_i)(x)=(h-l)/n+l$, which by assumption is less than $m$ (for all $x \in X-A$, these two worlds take the same values). So, by Strong Pareto $w_m \succeq \sum_{1\leq i\leq n} \alpha_i w_i$. By Transitivity, $w_m \succeq v$ as required.
\end{proof}

\begin{corollary}[Split Value Corollary] Let $W= W_F$. Suppose $\succeq$ satisfies Axioms 1-3, $w, v \in W$ and there is an infinite $A \subseteq X$ such that:
\begin{itemize}
\item There are some $h, l \in \mathbb R$ with $h>l$ and infinite disjoint subsets of $A$, $H, L$ such that for all $x \in H$ $w(x)=h$ and for all $x \in L$ $w(x)=L$
\item For some $k \in (l, h)$, for all $x \in A$ $v(x)=k$.
\end{itemize}
If for all $x \in X-A$, $w(x)\leq v(x)$, then $w \not \succeq v$. If for all $x \in X-A$, $w(x) \geq v(x)$ then $v \not \succeq v$
\end{corollary}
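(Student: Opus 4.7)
The approach will be to apply the Intermediate Value Lemma with a carefully chosen replacement value $m$, and then invoke Strong Pareto to contradict the hypothesized comparison. The two cases in the corollary are completely symmetric, so I will describe only the first in detail.

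For the first case, where $w(x) \le v(x)$ for all $x \in X - A$, I would suppose for contradiction that $w \succeq v$. Since $k \in (l, h)$ gives $l < k$, I can choose $m$ with $l < m < k$, so that $m \in (l, h)$ and the Intermediate Value Lemma applies, yielding $w_m \succeq v$. Comparing $w_m$ and $v$ pointwise: on $A$, $w_m(x) = m < k = v(x)$; on $X - A$, $w_m(x) = w(x) \le v(x)$ by the case hypothesis. Thus $v$ weakly dominates $w_m$ everywhere and strictly dominates it on the nonempty (indeed infinite) set $A$. Strong Pareto then yields $v \succ w_m$, contradicting $w_m \succeq v$.

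For the second case, where $w(x) \ge v(x)$ for all $x \in X - A$, the argument is symmetric: assuming $v \succeq w$, I would choose $m$ with $k < m < h$, apply the Intermediate Value Lemma in the $\preceq$ direction to conclude $v \succeq w_m$, and observe that $w_m(x) = m > k = v(x)$ on $A$ while $w_m(x) = w(x) \ge v(x)$ on $X - A$. Strong Pareto then delivers $w_m \succ v$, contradicting $v \succeq w_m$.

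There is no real obstacle here: the hard work is already packaged in the Intermediate Value Lemma. The only things to check are that $w_m \in W_F$ (it is, since $w_m$ has at most one more value in its range than $w$) and that the chosen $m$ lies strictly between $l$ and $h$ so that the IVL applies. The essential observation is that, because $k$ sits strictly between the values $l$ and $h$ that $w$ already attains infinitely often on $A$, the IVL permits us to slide the $A$-values of $w$ across $k$ to the \emph{wrong} side, at which point Strong Pareto closes the argument.
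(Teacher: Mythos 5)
Your proposal is correct and follows essentially the same route as the paper's own proof: apply the Intermediate Value Lemma with $m$ chosen on the opposite side of $k$ from the hypothesized comparison (i.e.\ $m<k$ when assuming $w\succeq v$, and $m>k$ when assuming $v\succeq w$), then derive a contradiction with Strong Pareto via the pointwise comparison of $w_m$ and $v$. (You also correctly read the second conclusion as $v\not\succeq w$ despite the typo in the statement, and the parallel typo in the Lemma's $\preceq$ clause, exactly as the paper's proof intends.)
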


\begin{proof} As above, for any $m\in (h,l)$, let $w_m$ be defined so that for all $x\in A$, $w_m(x)=m$, and for all $x \in X-A$, $w_m(x)=w(x)$. If $w \succeq v$ then by the Intermediate Value Lemma, for any $m \in (l,h)$ with $m<k$, $w_m \succeq v$. If for all $x \in X-A$ $w (x) \leq v(x)$, this contradicts Strong Pareto. Similarly, if $v \succeq w$, then for $m \in (l,h)$ with $m>k$, $v \succeq w_m$. Again if for all $x \in X-A$, $w(x) \geq v(x)$, this contradicts Strong Pareto. \end{proof}

\subsubsection{Third Step: Proof of the Theorem}

\begin{proof}
\emph{Case 1.} Suppose $w \sumord{} v$ but $v \not \sumord{} w$, i.e. that $\sum_{x \in X} w(x) -v(x)$ either converges unconditionally to a value $> 0$ or diverges to positive infinity. In either case, given that worlds are finite-valued, the set $A=\{ x | v(x)>w(x)\}$ must be finite, and there is a finite subset of $X$, $B$ such that $\sum_{x\in A \cup B} w(x)-v(x)  > 0$. Let $w^-$ be such that if $x \in A \cup B$ $w^-(x)=w(x)$, and for all other $x$, $w^-(x)=v(x)$. By Strong Pareto, $w \succeq w^-$. So if $v \succeq w$, we would have $v \succeq w^-$. But this is impossible by Lemma \ref{lemma1}, which requires (given that the worlds differ only on a finite set) that either $w^- \succ v$ or $w^- \perp v$.

\emph{Case 2.} Suppose that $w \not \sumord v$ and $v \not \sumord w$, so that there are infinitely many $x$ on which $w(x)>v(x)$ and also infinitely many $x$ on which $v(x)>w(x)$. We show that $w \not \succeq v$. Since this is proved without loss of generality, the claim follows.

Let $a$ be the largest value in the set $\{ r |$ for some $x \in X$ such that $w(x)>v(x)$, $w(x)=r\}$, and $b$ be the smallest value in the set $\{ r | $ for some $x \in X$ such that $w(x)>v(x)$ $v(x)=r\}$. (The existence of such an $a$ and $b$ is guaranteed by the fact that worlds are finite-valued.) Let $c$ be the largest value such that $\{x \in X | w(x)<v(x)$ and $w(x)=c\}$ is infinite and $d$ the smallest value such that $\{x \in X | w(x)=c$ and $v(x)=d \}$ is infinite. (Again the existence of such $c$ and $d$ is guaranteed by the fact that worlds are finite valued.) Finally let $A= \{x \in X | w(x)> v(x)\}$ and $E=\{ x \in X | w(x)=c $ and $v(x)=d\}$, and let $E_1$ $E_2$ be infinite disjoint subsets of $E$. (Note that $E \subset \{x \in X | v(x) >w(x)\}$.

By assumption $a>b$ and $d>c$. We consider three subcases governing inequalities between remaining values.

\underline{Case 2a: $a > c, d\geq b$.} Let $w^+$ be defined so that for all $x\in A$, $w^+(x)=a$ and for all other $x$ $w^+(x)=w(x)$. Let $v^-$ be defined so that for all $x \in A \cup E_1$, $v(x)=b$, while for all other $x$, $v^-(x)=v(x)$. By Strong Pareto, $w^+ \succeq w$ and $v \succeq v^-$, so if $w \succeq v$, then $w^+ \succeq v^-$. For some $m \in (c,d)$ let $w^+_m(x)=m$ if $x \in A \cup E_1$ while for all other $x$, $w^+_m(x)=w^+(x)$. By the Intermediate Value Lemma, if $w^+ \succeq v^-$, then $w^+_m \succeq v^-$. Now, for $x \in A \cup E_1 \cup E_2$, let $(w^+_m)^+(x)=m$, while for all other $x$, let $(w^+_m)^+(x)=w(x)$. The only difference between $w^+_m$ and $(w^+_m)^+$ is that the latter improves all $x \in E_2$ to take value $m$ instead of $c$. Since $m >c$, by Strong Pareto $(w^+_m)^+ \succeq w^+_m$, and so if $w \succeq v$ then $(w^+_m)^+ \succeq v^-$.  But the Split Value Corollary rules this out. To see that this corollary applies, note that for $x \in A \cup E_1$, $v^-(x)=b$, for $x \in E_2$, $v^-(x)=d$; for all $x \in A\cup E$ $(w^+_m)^+= m$ (with $m \in (b,d)$), and for all other $x$, we have $(w^+_m)^+(x)\leq v^-(x)$. So we cannot have $(w^+_m)^+ \succeq v^-$, which in turn means that $w \not \succeq v$.

\underline{Case 2b: $a >c, b>d$.} Let $w^+$ be defined so that for all $x$ such that $w(x)>v(x)$, $w^+(x)=a$ and for all other $x$ $w^+(x)=w(x)$ (as in the previous case). Let $v^-$ be defined differently (since $b>d$), so that for all $x$ such that $w(x)>v(x)$, $v^-(x)=d$. By Strong Pareto, $w^+ \succeq w$ and $v \succeq v^-$, so if $w \succeq v$, then $w^+ \succeq v^-$. But the Split Value Corollary rules this out. To see that this corollary applies note that $d \in (c,a)$, for all $x \in A\cup E$ $v^-(x)=d$, for all $x \in A$, $w^+(x)=a$, for all $x \in E$, $w^+(x)=c$, and for all $x \in X-(A\cup E)$, $v(x) \geq w(x)$. So  we cannot have $w^+ \succeq v^-$ which in turn means that $w \not \succeq v$. 

\underline{Case 2c: $a\leq c$.} By definition $a >b$ and $d >c$, so in this case we have $d>b$. Let $w^+$ be defined so that for all $x$ where $w(x)>v(x)$, $w^+(x)=c$ and for all other $x$ $w^+(x)=w(x)$. Let $v^-$ be defined so that for all $x$ where $w(x)>v(x)$, $v(x)=b$, while for all other $x$, $v^-(x)=v(x)$. As usual, Strong Pareto implies that $w ^+ \succeq w$ and $v \succeq v^-$ so that if $w \succeq v$, $w^+ \succeq v^-$. But again the Split Value Corollary rules this out. To see that the corollary applies note that $c \in (b,d)$, for all $x \in A\cup E$ $w^+(x)=c$, for all $x \in A$, $v^-(x)=b$, for all $x \in E$, $v^-(x)=d$, and for all $x \in X-(A\cup E)$, $v(x) \geq w(x)$. So  we cannot have $w^+ \succeq v^-$, which in turn means that $w \not \succeq v$.
\end{proof}

\subsection{Proof of Part 1 of Proposition \ref{nonmaximality}}

\begin{proof} \ \begin{itemize}
\item \emph{Reflexivity:} Immediate from the definition, given that $ \sumord{} $ satisfies it and is sufficient for the preorder.
\item \emph{Axiom 1 (Strong Pareto)} By the fact that $ \sumord{} $ satisfies Strong Pareto and that the comparisons introduced by (ii) can never occur when one world Pareto-dominates the other.
\item \emph{Axiom 2 (Permutation Invariance):} Immediate from the definition (differences between worlds taken individual by individual are invariant under permutations).
\item \emph{Transitivity:} We want to show that if $w \blacktriangleq v$ and $v \blacktriangleq u$ then $w \blacktriangleq u$. 
\begin{itemize}
\item Suppose first that $w \sumord{} v$. 
\begin{itemize} 
\item If $v \sumord{} u$, then $w \sumord{} u$, and hence $w \blacktriangleq u$ as required. 
\item So suppose that $v \blacktriangleq u$ because of condition (ii). If $w \sumord{} u$, then the claim is established. If not, then (ii)(a) is satisfied for $w$ and $u$. To show that (ii)(b) is satisfied for $w$ and $u$, note first that if the $A$ which witnesses condition (ii)(b) for $v$ in comparison to $u$ has finite (or null) intersection with $\{x | v(x) > w(x) \}$ this is obvious. If the intersection is infinite, note that $\sum_{x \in \{x | v(x)>w(x)\} } v(x)-w(x)$ must converge unconditionally, so for any $k>0$ there are at most finitely many $x$ with $v(x)-w(x)>k$. In particular, for any $c$ which witnesses condition (ii)(b) for $v$ in comparison to $u$ there are at most finitely many greater than $c-\epsilon$ for any $\epsilon>0$, $\epsilon<c$. So, there is an infinite $Z \subseteq A$ such that for all $x \in Z$ $v(x)-w(x)<c-\epsilon$ and hence (given that for all $x \in A$ $v(x)-u(x)>c$), such that for all $x \in Z$, $w(x)-u(x)>\epsilon$. So $Z$ and $\epsilon$ witness (ii)(b) for $w$ and $u$. Similar reasoning establishes (ii)(c): if $\{x | v(x)>w(x)\}$ is finite, or infinite but with an absolutely convergent sum, then since there is no $d>0$ and infinite $B$ such that $u(x)-v(x)>d$, there is also no $d>0$ and infinite $B$ such that $u(x)-w(x)>d$.
\end{itemize}
\item Suppose now that $w \blacktriangleq v$ because of condition (ii), and that $v \blacktriangleq u$. \begin{itemize}
\item If $v \sumord{} u$, then either $w \sumord{} u$, in which case the claim is shown, or (ii)(a) is satisfied for the comparison of $w$ and $u$. By similar arguments to those in the previous case, (ii)(b) and (ii)(c) must also be satisfied for $w$ and $u$ given that they were satisfied for $w$ and $v$.
\item If $v \blacktriangleq u$ by condition (ii), then either $w \sumord{} u$, in which case the claim is shown, or (ii)(a) is satisfied for the comparison of $w$ and $u$. Clearly (ii)(b) must also be satisfied, given that it is satisfied both in the comparison of $w$ and $v$, and in the comparison of $v$ and $u$. And (ii)(c) will also be satisfied: given that there is no $d>0$ and infinite $B$ satisfying the condition for either the comparison of $w$ and $v$ or the comparison of $v$ and $u$, there also can't be one for the comparison of $w$ and $u$.: So $w \blacktriangleq u$, as required.
\end{itemize}
\end{itemize}
\item \emph{Axiom 3 (Quasi-Independence)} We want to show that if $w \blacktriangleq v$, then for any $\alpha \in (0,1)$ and $u \in \mathbb R^X$ $\alpha w + (1-\alpha)u \succeq \alpha v + (1-\alpha)u$. If $w \sumord{} v$, the claim is obvious. If $w$ satisfies condition (ii) above with respect to $v$ then $\alpha w$ will satisfy (ii) with respect to $\alpha v$. Since the property is defined in terms of differences between worlds it is preserved by addition of a constant $(1-\alpha) u$ to each world. 
\end{itemize}
\end{proof}

\end{appendices}

\begin{small}
\begin{singlespacing}

\bibliographystyle{plainnat}
\bibliography{references}
\end{singlespacing}

\end{small}

\end{document}